\setlist[itemize]{noitemsep}
\setlist[enumerate]{noitemsep}
\newcommand\yes{\textsc{Yes}}
\newcommand\no{\textsc{No}}
\newcommand\vc{\ensuremath{\mathsf{vc}}\xspace}
\newcommand\fvs{\ensuremath{\mathsf{fvs}}\xspace}
\newcommand\oct{\ensuremath{\mathsf{oct}}\xspace}
\newlength{\RoundedBoxWidth}
\newsavebox{\GrayRoundedBox}
\newenvironment{GrayBox}[1]%
   {\setlength{\RoundedBoxWidth}{.93\textwidth}
    \def\boxheading{#1}
    \begin{lrbox}{\GrayRoundedBox}
       \begin{minipage}{\RoundedBoxWidth}}%
   {   \end{minipage}
    \end{lrbox}
    \begin{center}
    \begin{tikzpicture}%
       \node(Text)[draw=black!20,fill=white,rounded corners,%
             inner sep=2ex,text width=\RoundedBoxWidth]%
             {\usebox{\GrayRoundedBox}};
        \coordinate(x) at (current bounding box.north west);
        \node [draw=white,rectangle,inner sep=3pt,anchor=north west,fill=white]
        at ($(x)+(6pt,.75em)$) {\boxheading};
    \end{tikzpicture}
    \end{center}}
\newenvironment{defproblemx}[2][]{\noindent\ignorespaces%
                                \FrameSep=6pt%
                                \parindent=0pt%
                \vspace*{-1.5em}
                \ifthenelse{\isempty{#1}}{%
                  \begin{GrayBox}{\textsc{#2}}%
                }{%
                  \begin{GrayBox}{\textsc{#2} parameterized by~{#1}}%
                }
                \begin{tabular*}{\textwidth}{@{\hspace{.1em}} >{\itshape} p{1.8cm} p{0.8\textwidth} @{}}%
            }{
                \end{tabular*}%
                \end{GrayBox}%
                \ignorespacesafterend
            }
\newcommand{\defproblema}[3]{%
  \begin{defproblemx}{#1}
    Input:  & #2 \\
    Question: & #3
  \end{defproblemx}
}%
\newcommand\problemdef[3]{
	\defproblema{#1}{#2}{#3}
}
\definecolor{dark-red}{rgb}{0.4,0.15,0.15}
\definecolor{dark-blue}{rgb}{0.15,0.15,0.4}
\definecolor{medium-blue}{rgb}{0,0,0.5}
\definecolor{gray}{rgb}{0.5,0.5,0.5}
\definecolor{color-Ig}{rgb}{0.15,0.7,0.15}
\newcommand{\ETH}{{\sf ETH}\xspace}
\renewcommand{\FPT}{{\sf FPT}\xspace}
\renewcommand{\XP}{{\sf XP}\xspace}
\renewcommand{\coNP}{{\sf co-NP}\xspace}
\renewcommand{\NP}{{\sf NP}\xspace}
\renewcommand{\P}{{\sf P}\xspace}
\newcommand{\MSO}{{\sf MSO}$_2$\xspace}
\newcommand{\Acal}{\mathcal{A}}
\newcommand{\Hcal}{\mathcal{H}}
\newcommand{\Mcal}{\mathcal{M}}
\newcommand{\Ocal}{\mathcal{O}}
\newcommand\bc{\ensuremath{\mathsf{bc}}\xspace}
\newcommand\tw{\ensuremath{\mathsf{tw}}\xspace}
\newcommand\contracpidk{\textsc{$k$-Contraction($\pi,d$)}\xspace}
\newcommand\contracpid{\textsc{Contraction($\pi,d$)}\xspace}
\newcommand\contracpik{\textsc{$k$-Contraction($\pi$)}\xspace}
\newcommand\contracpi{\textsc{Contraction($\pi$)}\xspace}
\newcommand\contracvc{\textsc{Contraction($\vc$)}\xspace}
\newcommand\contracpione{\textsc{$1$-Contraction($\pi,1$)}\xspace}
\newcommand\contracvcone{\textsc{$1$-Contraction($\vc,1$)}\xspace}
\newcommand\contracvcd{\textsc{Contraction($\vc,d$)}\xspace}
\newcommand\contracfvs{\textsc{Contraction($\fvs$)}\xspace}
\newcommand\contracHcal{\textsc{Contraction($\tau^{\prec}_{\Hcal}$)}\xspace}
\newcommand\contracHone{\textsc{$1$-Contraction($\tau^{\prec}_H,1$)}\xspace}
\newcommand\contracHcalone{\textsc{$1$-Contraction($\tau^{\prec}_{\Hcal},1$)}\xspace}
\newcommand\contracHcalonebullet{\textsc{$1$-Contraction($\tau^{\prec}_{\Hcal^{\bullet}},1$)}\xspace}
\newcommand{\defproblemaOPT}[3]{%
  \begin{defproblemx}{#1}
    Input:  & #2 \\
    Output: & #3
  \end{defproblemx}
}%
\newcommand\problemOPTdef[3]{
	\defproblemaOPT{#1}{#2}{#3}
}
\newcommand\mincontracpi{\textsc{Min-Contraction($\pi$)}\xspace}
\newcommand\mincontracvc{\textsc{Min-Contraction($\vc$)}\xspace}
\title{Reducing graph transversals via edge contractions} 
\titlerunning{Reducing graph transversals via edge contractions}
\author{Paloma T. Lima}{Department of Informatics, University of Bergen, Norway}{paloma.lima@uib.no}{}{}
\author{Vinicius F.\ dos Santos}{Departamento de Ci\^encia da Computa\c{c}\~{a}o,
	Universidade Federal de Minas Gerais, Belo Horizonte, Brazil}
{viniciussantos@dcc.ufmg.br}{https://orcid.org/0000-0002-4608-4559}
{Grant APQ-02592-16 Minas Gerais Research Support Foundation (FAPEMIG) and Grants 311679/2018-8 and 421660/2016-3 National Council for Scientific and Technological Development (CNPq).}
\author{Ignasi Sau}{LIRMM, Universit\'e de Montpellier, CNRS, Montpellier, France}{ignasi.sau@lirmm.fr}{https://orcid.org/0000-0002-8981-9287}{CAPES-PRINT Institutional Internationalization Program, process 88887.371209/ 2019-00, and projects DEMOGRAPH (ANR-16-CE40-0028), ESIGMA (ANR-17-CE23-0010),  ELIT (ANR-20-CE48-0008-01), and UTMA (ANR-20-CE92-0027).}
\author{U\'everton S. Souza}{Instituto de Computa\c c\~ao, Universidade Federal Fluminense, Niter\'oi, Brazil}{ueverton@ic.uff.br}{https://orcid.org/0000-0002-5320-9209}{Grant E-26/203.272/2017 Rio de Janeiro Research Support Foundation (FAPERJ) and Grant 303726/2017-2 National Council for Scientific and Technological Development (CNPq).}
\authorrunning{P.\ T.\ Lima, V.\ F.\ dos Santos, I.\ Sau and U.\ S.\ Souza}
\keywords{blocker problem, edge contraction, graph transversal, parameterized complexity, vertex cover, feedback vertex set, odd cycle transversal.}
\begin{document}

\maketitle

\begin{abstract}
For a graph invariant $\pi$, the \textsc{Contraction($\pi$)} problem consists of, given a graph $G$ and two positive integers $k,d$, deciding whether one can contract at most $k$ edges of $G$ to obtain a graph in which $\pi$ has dropped by at least $d$. Galby et al.~[ISAAC 2019, MFCS 2019] recently studied the case where $\pi$ is the size of a minimum dominating set. We focus on graph invariants defined as the minimum size of a vertex set that hits all the occurrences of graphs in a  collection $\Hcal$ according to a fixed containment relation. We prove \coNP-hardness results under some assumptions on the graphs in $\Hcal$, which in particular imply that \textsc{Contraction($\pi$)} is \coNP-hard even for fixed $k=d=1$ when $\pi$ is the size of a minimum feedback vertex set or an odd cycle transversal. In sharp contrast, we show that when $\pi$ is the size of a minimum vertex cover, the problem is in \XP parameterized by $d$.
\end{abstract}

\newpage

\section{Introduction}
\label{sec:intro}


Graph modification problems play a central role in algorithmic graph theory and have been widely studied in the last few years~\cite{crespelle2020survey,FominSM15grap,BodlaenderHL14grap}. In this kind of problem, given a graph, we want to perform a small number of modifications so that the resulting graph satisfies a desired property. Typically, this property is described as a graph class to which the resulting graph must belong, and the corresponding problem is usually \NP-hard~\cite{10.1145/800133.804355,WatanabeAN83,LewisY80}. Numerous famous problems can be stated as graph modification problems. For instance, if the operation is vertex deletion and the target graph class is that of forests, we obtain the well-known {\sc Feedback Vertex Set} problem. A distinct type of graph modification problem that has been considered more recently is concerned with graph invariants\footnote{We use ``invariant'' instead of parameter in order to avoid confusion with the parameter of the corresponding parameterized problem.}, instead of graph classes. The goal here is to perform a small number of modifications in order to decrease (or increase) a given invariant of the input graph. These are the so-called \emph{blocker problems}, the main object of study in this work.

More precisely, in a blocker problem with invariant $\pi$, given a graph $G$ and a set $\mathcal{M}$ of graph modification operations, the question is whether $G$ can be modified into a graph $G'$ such that $\pi(G')\leq \pi(G)-d$, for some \emph{threshold} $d$, via at most $k$ operations from $\mathcal{M}$. The name {\sl blocker} comes from the fact that the set of vertices or edges involved in the modifications can be viewed as ``blocking'' the invariant $\pi$, that is, preventing $\pi$ from being smaller, as we would like in a minimization problem. Identifying parts of the graph that are responsible for an increase in a graph invariant gives useful information about the graph structure and has been the central question around many graph problems. For instance, if the invariant in question is the size of a longest path, $d=1$ and the operation is vertex deletion, the problem becomes equivalent to testing whether there exists a set of $k$ vertices that intersects every longest path of the input graph~\cite{RautenbachS14,CHEN2017287,CERIOLI2020DM,CERIOLI2019}. Another example is that of computing the Hadwiger number of a graph. The {\sc Hadwiger Number} problem takes as input a graph $G$ and an integer $t$, and asks whether there exists a set of edges in $G$ the contraction of which results in a graph isomorphic to the complete graph on $t$ vertices~\cite{BOLLOBAS1980195,doi:10.1137/140975279,fomin2020computation}. This problem can be formulated as a blocker problem with the edge contraction operation, the invariant being the independence number (denoted by $\alpha$), $k=|V(G)|-t$, and $d=\alpha(G)-1$.

Because of their relevance and connection to other well-studied graph problems, blocker problems have been investigated for numerous graph invariants, such as the chromatic number, the independence number, the matching number, the diameter, the domination number, the total domination number, and the clique number of a graph~\cite{BTT11,CWP11, bazgan2015blockers,diner2018contraction,kimarxiv,PBP,paulusma2018critical,RBPDCZ10,WatanabeAN83,GalbyArxivTotalD,GalbyArxivP3}. The set $\mathcal{M}$ has so far been restricted to contain a single operation, usually vertex deletion, edge deletion, edge addition, or edge contraction. In this work, we restrict ourselves to the edge contraction operation. Formally, we are interested in the following  problem, where $\pi$ is any graph invariant.

\problemdef
	{\contracpi}
	{A graph $G$ and two positive integers $k,d$.}
	{Can $G$ be $k$-contracted into a graph $G'$ such that $\pi(G')\leq \pi(G)-d$?}

When $k$ and $d$ are fixed instead of being part of the input, we denote the corresponding problem by \contracpidk. Blocker problems with the edge contraction operation have already been studied with respect to the chromatic number, clique number, and independence number~\cite{diner2018contraction,paulusma2018critical}, and the domination number~\cite{GALBY2021DM}, denoted by $\chi$, $\omega$, $\alpha$, and $\gamma$, respectively. These works address the problem from the point of view of graph classes. Diner et al.~\cite{diner2018contraction} showed, among other results, that \contracpi is \NP-complete restricted to split graphs for $\pi\in\{\chi,\alpha,\omega\}$, but it is polynomial-time solvable in this graph class for fixed~$d$ in all three cases. Galby et al.~\cite{GALBY2021DM} recently initiated the study of the problem for $\pi=\gamma$ for the case $d=1$,
 providing several negative and positive results restricted to particular graph classes, such as a polynomial-time algorithm for {\sc $k$-Contraction$(\gamma,1)$} on $(P_5+pK_1)$-free graphs, for any $p\geq 1$.
 Galby et al.~\cite{GALBY2021DM} also considered a variant of the blocker problem in which an edge is \emph{given as part of the input}. Namely, they showed that the problem of deciding whether the contraction of this specific edge decreases the domination number of a graph admits no polynomial-time algorithm unless $\P$=$\NP$. We observe here that their proof~\cite[Theorem 3.13]{GALBY2021DM} in fact works for any graph invariant satisfying two specific conditions, as stated in the following proposition.
	
\begin{proposition}[Galby et al.~\cite{GALBY2021DM}]
\label{prop:no-poly-fixed-edge}
Let $\pi$ be a graph invariant such that
\begin{enumerate}
\item[(i)] it is \NP-hard to compute the $\pi$-number of a graph and
\item[(ii)] contracting an edge reduces $\pi$ by at most one.
\end{enumerate}
Then, there exists no polynomial-time algorithm deciding whether contracting one given edge decreases the $\pi$-number of a graph, unless $\P$=$\NP$.
\end{proposition}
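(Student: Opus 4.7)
The plan is to establish a polynomial-time Turing reduction from the \NP-hard problem of computing $\pi(G)$ (condition~(i)) to the fixed-edge contraction decision problem, using only polynomially many oracle calls. Given an input graph $G$, the goal is to determine $\pi(G)$ exactly with the help of the oracle, which would place the computation of $\pi$ in \P, contradicting~(i) unless \P $=$ \NP.

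The procedure is iterative: set $G_0 := G$ and initialize a counter $c := 0$; at step $i$, if $G_i$ still has an edge, pick an arbitrary edge $e_i$, query the oracle on $(G_i, e_i)$, increment $c$ whenever the answer is \yes, and set $G_{i+1} := G_i / e_i$. Since every contraction removes one vertex, the procedure halts after at most $|V(G)|-1$ steps at an edgeless graph $G_N$, on which $\pi(G_N)$ can be computed directly (it depends only on the number of isolated vertices).

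Correctness follows by telescoping $\pi(G_0)-\pi(G_N)=\sum_{i=0}^{N-1}\bigl(\pi(G_i)-\pi(G_{i+1})\bigr)$. By condition~(ii), each summand is at most one, and a \yes answer forces it to equal exactly one; a \no answer forces it to equal zero provided $\pi$ does not increase under edge contraction, which is the case for every transversal-type invariant considered here. The sum then equals $c$, so $\pi(G)=\pi(G_N)+c$, solving the underlying \NP-hard instance with $\Oh(|V(G)|)$ oracle calls plus a trivial base computation. The main subtlety, and the place where I expect to have to be careful, is precisely to ensure the non-increase property: condition~(ii) only bounds how much $\pi$ can decrease, so one must add (or verify in each application, as is immediate for vertex cover, feedback vertex set, odd cycle transversal, and dominating set) that a \no answer genuinely means ``$\pi$ unchanged''; with this in hand, the reduction carries through.
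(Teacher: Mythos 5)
Your proof is correct and is essentially the argument of Galby et al.\ that the paper invokes (the paper only cites their Theorem~3.13 and does not reproduce the proof): iteratively contract edges, count the \yes\ answers from the oracle, and recover $\pi(G)$ by telescoping down to an edgeless graph. The subtlety you flag is genuine but harmless---condition~(ii) as literally phrased only bounds the decrease, and the telescoping needs that a \no\ answer means $\pi$ is unchanged, i.e.\ that $\pi$ is non-increasing under contraction---yet this is clearly the intended reading of~(ii), and it holds for every invariant ($\vc$, $\fvs$, $\oct$, $\gamma$) to which the proposition is applied in the paper.
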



In this work, the invariants we focus on are \emph{$\mathcal{H}$-transversals}, that is, the minimum size of a vertex set of a graph that hits all the occurrences of graphs in a fixed (finite or infinite) collection $\mathcal{H}$  according to a specified containment relation $\prec$. We denote this invariant by $\tau_\mathcal{H}^\prec$. Note that distinct instantiations of $\mathcal{H}$ and $\prec$ capture, for instance, the vertex cover, feedback vertex set, and  odd cycle transversal numbers, and that these three invariants satisfy the conditions of Proposition~\ref{prop:no-poly-fixed-edge}.

\medskip
\noindent \textbf{Our results and techniques}. We show (Theorem~\ref{thm:hard-Hcal}) that \contracHcalone is \coNP-hard when $\mathcal{H}$ is a family of 2-connected graphs containing at least one non-complete graph and $\prec$ is any of the subgraph, induced subgraph, minor, or topological minor containment relations. This implies that it is \coNP-hard to test whether we can reduce the feedback vertex set number or the odd cycle transversal number of a graph by performing {\sl one} edge contraction. Note that this result is not implied by  Proposition~\ref{prop:no-poly-fixed-edge}, since we do {\sl not} specify which edge should be contracted. We also show (Theorem~\ref{thm:hard-cliques}) that the problem is \coNP-hard if $\mathcal{H}$ is a family of cliques of size at least three and $\prec$ is the minor or topological minor containment relation. The same holds (Theorem~\ref{thm:hard-paths}) if $\mathcal{H}$ is a family of graphs containing a path on at least four vertices and any collection of 2-connected graphs and $\prec$ is the subgraph, induced subgraph, minor, or topological minor containment relation. All these reductions are from the 3-\textsc{Sat} problem restricted to {\sl clean} formulas (see Section~\ref{sec:prelim} for the definition).

\smallskip

We point out that, as can be seen by earlier results and the ones mentioned  above, blocker problems are generally very hard, and become polynomial-time solvable only when restricted to specific graph classes. However, we show that the picture changes completely when the invariant in question is the vertex cover number of a graph (denoted by $\vc$): we prove (Theorem~\ref{thm:vc-XP}) that \contracvc can be solved in \XP time parameterized by~$d$ on {\sl general graphs}, hence in polynomial time for fixed $d$, in particular for $d=1$. This result should be compared to Proposition~\ref{prop:no-poly-fixed-edge}, which shows that the problem is hard for $d=1$ if the edge to be contracted is {\sl prescribed}. Note that since the contraction of an edge may drop the minimum vertex cover of a graph by at most one, we may assume that $k \geq d$ (as otherwise the answer is trivially `\no'), hence parameter $d$ is {\sl stronger} than $k$.  Our algorithm (cf. Algorithm~\ref{alg:XPvc}) starts by checking whether the \emph{bipartite contraction number} of $G$ (i.e., the minimum number of edges to be contracted in order to obtain a bipartite graph), denoted by $\bc(G)$, is at most $d-1$. This can be done in \FPT time by a result of Heggernes et al.~\cite{HeggernesHLP13}, later improved by Guillemot and Marx~\cite{GUILLEMOTMARX}.
If $\bc(G) \geq d$, a simple argument allows to conclude that we are dealing with a \yes-instance. Otherwise, we distinguish two cases depending on whether $G$ contains a connected component $C$ with $\vc(C) > d$ or not. If it is {\sl not} the case, we show (Lemma~\ref{lem:smallVC}) that the problem can be solved in \FPT time by combining a formulation in \MSO logic and a dynamic programming algorithm. Otherwise, we prove that we may assume (Lemma~\ref{lem:2approxVC}) that $k< 2d$, which enables us to enumerate all subsets $F \subseteq E(G)$ of size at most $k$ and, for each of them, solve the problem in \FPT time by a branching algorithm,
exploiting the fact that $\vc$ can be computed in polynomial time on bipartite graphs.

Finally, we also show that a small modification of the above algorithm yields (Corollary~\ref{cor:vc-2-approx}) that the problem of determining the minimum number of edges to be contracted to drop the vertex cover number of a graph by $d$ can be 2-approximated in \FPT-time parameterized by $d$.


\medskip
\noindent \textbf{Organization}. In Section~\ref{sec:prelim} we provide some preliminaries and formally define all the problems mentioned throughout the text. In Section~\ref{sec:hard} we prove the \coNP-hardness results, and in Section~\ref{sec:vertex-cover} we present the algorithms for reducing the size of a minimum vertex cover via edge contractions. We conclude the article in Section~\ref{sec:concl} with some further observations and  directions for further research.





	


\section{Preliminaries}
\label{sec:prelim}

\noindent\textbf{Graph notation.} We use standard graph-theoretic notation, and we refer the reader to~\cite{Diestel12} for any undefined notation. We will only consider undirected graphs without loops nor multiple edges, and we denote an edge between two vertices $u$ and $v$ by $\{u,v\}$. A subgraph $H$ of a graph $G$ is \emph{induced} if $H$ can be obtained from $G$ by deleting vertices. A graph $G$ is \emph{$H$-free} if it does not contain any induced subgraph isomorphic to $H$.  For a graph $G$ and a set $S \subseteq V(G)$, we use the notation $G \setminus S :=G[V(G) \setminus S]$. For a set $F \subseteq E(G)$, we denote by $V(F)$ the set of vertices incident to some edge in $F$. We denote by $\Delta(G)$ (resp. $\omega(G)$) the maximum vertex degree (resp. clique size) of a graph $G$.
For an integer $h \geq 1$, we denote by $P_h$ (resp. $C_h$, $K_h$) the path (resp. cycle, clique) on $h$ vertices. A \emph{star} is a tree with at least one edge in which one vertex is adjacent to all other vertices.
A vertex set $S$ of a connected graph $G$ is a \emph{separator} if $G \setminus S$ is disconnected. 
For an integer $k\geq 1$, a graph $G$ is \emph{$k$-connected} if it is connected and does not have any separator of size at most $k-1$. For an integer $k \geq 1$, we denote by $[k]$ the set of all integers $i$ such that $1 \leq i \leq k$.

The \emph{open} (resp. \emph{closed}) \emph{neighborhood} of a vertex $v$ in a graph $G$ is denoted by $N_G(v)$ (resp. $N_G[v]$). We may drop the subscript if the graph $G$ is clear from the context. The \emph{contraction} of an edge $e = \{u,v\}$ in a graph $G$ results in a graph $G'$ obtained from $G$ by removing $u$ and $v$, and adding a new vertex $v_e$ with $N_{G'}(v_e) = N_G(u) \cup N_G(v)$. We denote by $G/e$ the graph obtained from $G$ by contracting an edge $e$, and if $F \subseteq E(G)$, we denote by $G/F$ the graph obtained from $G$ by contracting all the edges in $F$, in any order; it is easy to verify that the resulting graph does not depend on the order in which the contractions are applied. If $|F| = k$, we say that $G$ is \emph{$k$-contracted} into $G / F$.

A \emph{vertex cover} (resp. \emph{feedback vertex set}, \emph{odd cycle transversal}) of a graph $G$ is a set $S \subseteq V(G)$ such that $G \setminus S$ is edgeless (resp. acyclic, bipartite). We denote the minimum size of a vertex cover (resp. feedback vertex set, odd cycle transversal) of a graph $G$ by $\vc(G)$ (resp. $\fvs(G)$, $\oct(G)$). Note that, if $G$ is connected, $\vc(G) = 1$ if and only if $G$ is a star.

A graph $H$ is a \emph{minor} of a graph $G$ if it can be obtained from $G$ by removing vertices, deleting edges, and contracting edges. A graph $H$ is a \emph{topological minor} of a graph $G$ if it can be obtained from $G$ by removing vertices, deleting edges, and contracting edges having at least one vertex of degree at most two. The operation of \emph{subdividing} an edge $\{u,v\}$ consists in deleting the edge $\{u,v\}$ and adding a new vertex $w$ adjacent to $u$ and $v$.

\medskip
\noindent\textbf{Graph transversals.} For a fixed graph containment relation $\prec$ and a fixed (finite or infinite) collection of graphs $\Hcal$, we define the invariant $\tau_{\Hcal}^{\prec}$ such that, for every graph $G$, $\tau_{\Hcal}^{\prec}(G)$ is equal to the minimum size of a set
$S \subseteq V(G)$ such that $G \setminus S$ does not contain any of the graphs in $\Hcal$ according to containment relation $\prec$. If $\Hcal = \{H\}$, we denote $\tau_{\{H\}}^{\prec}$ by $\tau_{H}^{\prec}$. Such a set $S$ is called an $\Hcal$-transversal or an $\Hcal$-hitting set.

For instance, if $\prec$ is the minor relation and $H$ is an edge (resp. a triangle), then $\tau_{H}^{\prec}$ is the size of a minimum vertex cover (resp. feedback vertex set), which we abbreviate as  $\vc$ (resp. $\fvs$). On the other hand, if $\prec$ is the subgraph relation and $\Hcal$ contains all odd cycles, then $\tau_{\Hcal}^{\prec}$ is the size of a minimum odd cycle transversal, which we abbreviate as $\oct$.

%

\medskip
\noindent\textbf{Definition of the problems.} For a graph invariant $\pi$,
%
%
%
%
we also consider the versions of the $\contracpi$ problem, defined in the Introduction, where one or both positive integers $k$ and $d$ are {\sl fixed}, instead of being part of the input. Namely, we denote by \contracpik, \contracpid, and \contracpidk the version of $\contracpi$ in which $k$, $d$, and both $k$ and $d$ are fixed, respectively.

%
%

Finally, we define the following optimization version of \contracpi.

\problemOPTdef
	{\mincontracpi}
	{A graph $G$ and a positive integer $d$.}
	{The minimum integer $k$ such that $G$ be $k$-contracted into a graph $G'$ such that $\pi(G')\leq \pi(G)-d$.}

\noindent\textbf{Parameterized complexity.} We refer the reader to~\cite{DF13,CyganFKLMPPS15} for basic background on parameterized complexity, and we recall here only some basic definitions.
A \emph{parameterized problem} is a decision problem whose instances are pairs $(x,k) \in \Sigma^* \times \mathbb{N}$, where $k$ is called the \emph{parameter}.
A parameterized problem is \emph{fixed-parameter tractable} (\FPT) if there exists an algorithm $\Acal$, a computable function $f$, and a constant $c$ such that given an instance $I=(x,k)$,
$\Acal$ (called an {\sf FPT} \emph{algorithm}) correctly decides whether $I \in L$ in time bounded by $f(k) \cdot |I|^c$. A parameterized problem is \emph{slice-wise polynomial} ({\sf XP}) if there exists an algorithm $\Acal$ and two computable functions $f,g$ such that given an instance $I=(x,k)$,
$\Acal$ (called an {\sf XP} \emph{algorithm}) correctly decides whether $I \in L$ in time bounded by $f(k) \cdot |I|^{g(k)}$.


%

Within parameterized problems, the class {\sf W}[1] may be seen as the parameterized equivalent to the class \NP of classical optimization problems. Without entering into details (see~\cite{DF13,CyganFKLMPPS15} for the formal definitions), a parameterized problem being {\sf W}[1]-\emph{hard} can be seen as a strong evidence that this problem is {\sl not} \FPT. The canonical example of {\sf W}[1]-hard problem is \textsc{Independent Set} parameterized by the size of the solution.
To transfer ${\sf W}[1]$-hardness from one problem to another, one uses a \emph{parameterized reduction}, which given an input $I=(x,k)$ of the source problem, computes in time $f(k) \cdot |I|^c$, for some computable function $f$ and a constant $c$, an equivalent instance $I'=(x',k')$ of the target problem, such that $k'$ is bounded by a function depending only on $k$. An equivalent definition of $\W$[1]-hard problem is any problem that admits a parameterized reduction from \textsc{Independent Set} parameterized by the size of the solution.

\medskip
\noindent
\textbf{Treewidth and Courcelle's Theorem.} For an integer $k \geq 1$, a \emph{$k$-tree} is a graph that be obtained from a $k$-clique by recursively adding vertices adjacent to a $k$-clique of the current graph. The \emph{treewidth} of a graph $G$, denoted by $\tw(G)$, is the smallest integer $k$ such that $G$ is a subgraph of a $k$-tree. The syntax of \emph{monadic second order} (\MSO) logic of graphs includes the logical connectives $\vee$, $\wedge$, $\neg$, variables for vertices, edges, sets of vertices and sets of edges, the quantifiers $\forall, \exists$ that can be applied to these variables, and the binary relations expressing whether a vertex or an edge belong to a set, whether an edge is incident to vertex, whether two vertices are adjacent, and whether two sets are equal. The following result of Courcelle~\cite{Courcelle90}, as well as one of its several optimization variants~\cite{ArnborgLS91}, is one of the most widely used results in the area of parameterized complexity.

\begin{proposition}[Courcelle~\cite{Courcelle90}, Arnborg et al.~\cite{ArnborgLS91}]
\label{thm:Courcelle}
Checking whether an \MSO formula $\varphi$ holds on an $n$-vertex graph of treewidth at most $\tw$ can be done in time $f(\varphi, \tw) \cdot n$, for a computable function $f$. Moreover, within the same running time, one can find a vertex or edge set of $G$ of maximum or minimum size that satisfies $\varphi$.
\end{proposition}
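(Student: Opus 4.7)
The plan is to reduce MSO model-checking on a graph $G$ of bounded treewidth to an emptiness/reachability computation on a finite tree automaton running over a tree decomposition of $G$. First I would compute a tree decomposition of $G$ of width $O(\tw)$: since we are already paying an $f(\varphi,\tw)$ overhead, any constant-factor approximation suffices, e.g. Bodlaender's linear-time algorithm producing a decomposition of width exactly $\tw(G)$ in time $g(\tw)\cdot n$, or a cheaper $2^{O(\tw)}\cdot n$ approximation. Then I would convert it into a \emph{nice} tree decomposition (with leaf, introduce-vertex, introduce-edge, forget, and join nodes), which has $O(\tw \cdot n)$ nodes and allows a clean bottom-up dynamic program.

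The central step is to carry, at each node $t$ with bag $X_t$ inducing the subgraph $G_t$ of vertices/edges introduced so far, a \emph{finite} summary of $(G_t, X_t)$ sufficient to decide $\varphi$. I would formalize this via the notion of MSO $q$-types, where $q$ is the quantifier rank of $\varphi$ (together with the finitely many free set variables produced when one decomposes $\varphi$ into subformulas). The Feferman--Vaught style composition lemma for MSO on structures with a distinguished interface of bounded size gives that there are only finitely many types of $(G_t, X_t)$, and that the type of $G_t$ at a join or introduce node is computable from the types of its children. Equivalently, and more concretely, one translates $\varphi$ into a deterministic bottom-up tree automaton $\mathcal{A}_\varphi$ over the alphabet of nice-tree-decomposition node labels for width $\tw$; the number of states is bounded by some function $f(\varphi,\tw)$. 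Running $\mathcal{A}_\varphi$ on the decomposition tree takes time $f(\varphi,\tw)\cdot n$, which yields the decision algorithm.

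For the optimization refinement (Arnborg--Lagergren--Seese), I would index the DP table at each bag additionally by the intersection pattern of the candidate set $S$ with $X_t$ (a subset of $X_t$, so $2^{\tw+1}$ possibilities) and by the \emph{size} of the partial solution. More precisely, I would store, for every MSO type $\tau$ and every guess of $S \cap X_t$, the minimum (or maximum) size of a set $S \subseteq V(G_t)$ (or $E(G_t)$) consistent with that guess whose restriction realizes type $\tau$. The transitions at introduce/forget/join nodes update the type via the composition lemma and update the size additively; at the root one reads off the optimum over all accepting types. The table has size $f(\varphi,\tw)\cdot 2^{\tw+1}\cdot (n+1)$ and the transitions are computable in time depending only on $\varphi$ and $\tw$, giving the claimed $f(\varphi,\tw)\cdot n$ bound.

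The main obstacle, and the only truly nontrivial ingredient, is the translation of an MSO$_2$ formula into the finite-state automaton on bag-labelled trees, which is where the tower-type dependence $f(\varphi,\tw)$ comes from; this rests on showing that MSO$_2$ types of bounded-interface graphs are preserved under the gluing operations corresponding to nice tree decomposition nodes, and that acceptance of $\varphi$ on the whole graph depends only on the type at the root. Everything else (tree decomposition computation, nice-decomposition conversion, running the DP) is routine once this compositionality is in place.
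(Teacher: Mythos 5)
The paper does not prove this proposition at all --- it imports it as a known black box from Courcelle~\cite{Courcelle90} and Arnborg et al.~\cite{ArnborgLS91} --- and your sketch is precisely the standard proof underlying those references: compute a (nice) tree decomposition, translate the formula into a finite-state bottom-up tree automaton via MSO types and the Feferman--Vaught-style composition lemma, and for the optimization variant augment the automaton states with partial-solution sizes and the trace of the candidate set in each bag. Your outline is correct and matches the cited approach; the only caveat worth recording is that the ``truly nontrivial ingredient'' you isolate (finiteness and compositionality of MSO$_2$ types over bounded-size interfaces) is indeed the entire content of the theorem and would need the full type-composition argument to be spelled out for this to count as a complete proof rather than a faithful roadmap.
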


\noindent
\textbf{Exponential Time Hypothesis and clean} 3-\textsc{Sat}. The \emph{Exponential Time Hypothesis} (\ETH) of Impagliazzo and Paturi~\cite{ImpagliazzoP01-ETH} implies that the 3-\textsc{Sat} problem on $n$ variables cannot be solved in time $2^{o(n)}$.
The Sparsification Lemma of Impagliazzo et al.~\cite{ImpagliazzoP01} implies that if the \ETH holds, then there is no algorithm solving a 3-\textsc{Sat} formula with $n$ variables and $m$ clauses in time $2^{o(n+m)}$. Using the terminology from Cygan et al.~\cite{CyganMPP17}, a 3-\textsc{Sat} formula $\varphi$, in conjunctive normal form, is said to be \emph{clean} if each variable of $\varphi$ appears exactly three times, at least once positively and at least once negatively, and each clause of $\varphi$ contains two or three literals and does not contain twice the same variable. Cygan et al.~\cite{CyganMPP17} observed the following useful lemma.

\begin{lemma}[Cygan et al.~\cite{CyganMPP17}]\label{lem:clean3SAT}
The problem of deciding whether a clean 3-\textsc{Sat} formula with $n$ variables is satisfiable is \NP-hard, and the existence of an algorithm in time $2^{o(n)}$ to solve it would violate the \ETH.
\end{lemma}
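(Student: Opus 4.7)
The plan is to reduce from standard 3-\textsc{Sat} to clean 3-\textsc{Sat} while inflating the number of variables only by a constant factor, so that both \NP-hardness and the $2^{o(n)}$ lower bound transfer directly. First, I would invoke the Sparsification Lemma to assume that the input formula $\varphi$ has $n$ variables and $m = O(n)$ clauses, and hence $O(n)$ literal occurrences in total. I would then apply a short preprocessing: delete tautological clauses; merge repeated literals inside a clause (possibly turning some 3-clauses into 2-clauses); and iteratively perform pure-literal elimination, setting every variable that occurs with only one sign to the value satisfying all its clauses and removing those clauses. After this stage every surviving variable appears at least twice in $\varphi$ with both signs, and every clause has two or three literals over pairwise distinct variables.

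Next, for each surviving variable $x$ with $t_x \geq 2$ occurrences, I would introduce $t_x$ fresh copies $x^1,\ldots,x^{t_x}$, replace the $i$-th occurrence of a literal of $x$ by the corresponding literal of $x^i$, and add the $t_x$ binary implication clauses $(\neg x^i \vee x^{i+1})$ for $i \in [t_x]$ with indices taken modulo $t_x$. This cycle of implications forces $x^1 = \cdots = x^{t_x}$ in any satisfying assignment, so the resulting formula $\varphi'$ is satisfiable if and only if $\varphi$ is. By construction, each copy $x^i$ appears exactly three times in $\varphi'$: once (with its inherited sign) in the clause of $\varphi$ where it replaced $x$, once positively in $(\neg x^{i-1} \vee x^i)$, and once negatively in $(\neg x^i \vee x^{i+1})$; in particular, $x^i$ occurs with both signs. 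The new clauses are 2-clauses over two distinct variables, and the inherited clauses still have two or three literals over distinct variables, so $\varphi'$ is clean.

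Since the number of variables of $\varphi'$ equals the total number of literal occurrences in the preprocessed $\varphi$, which is $O(n)$ by the Sparsification Lemma, \NP-hardness of clean 3-\textsc{Sat} follows from the \NP-hardness of 3-\textsc{Sat}, and any $2^{o(n')}$ algorithm for clean 3-\textsc{Sat} would translate into a $2^{o(n)}$ algorithm for 3-\textsc{Sat}, contradicting the \ETH. The only mildly delicate point is verifying that iterated pure-literal elimination leaves every remaining variable with both signs and does not blow up the instance; this is immediate since the process only shrinks the formula and is run to fixpoint, after which, by definition, no variable is pure.
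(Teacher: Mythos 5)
The paper does not prove this lemma at all: it is imported verbatim from Cygan et al.~\cite{CyganMPP17}, so there is no in-paper argument to compare against. Your reconstruction is the standard one (sparsify, normalize, then split each variable $x$ with $t_x\geq 2$ occurrences into copies tied together by a cyclic chain of implications $(\neg x^i \vee x^{i+1})$), and it is essentially correct: the occurrence count of each copy is exactly three with both signs present, the implication clauses are 2-clauses over distinct variables since $t_x\geq 2$ after pure-literal elimination, and the variable count of the new formula is the number of literal occurrences, which is $O(n)$ after sparsification, so both \NP-hardness and the \ETH lower bound transfer. One small loose end: merging repeated literals inside a clause can produce a \emph{unit} clause (e.g.\ $(x\vee x\vee x)$ collapses to $(x)$), which is not permitted in a clean formula and which your preprocessing does not address; you should add unit propagation (run jointly with pure-literal elimination to a fixpoint, which terminates since both steps only shrink the formula) or simply note that the \ETH-hard version of 3-\textsc{Sat} may be taken with clauses over pairwise distinct variables. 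This is a routine fix, not a flaw in the approach.
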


\section{Hardness results}
\label{sec:hard}

%

We start with some definitions that will be used in the reductions of this section. Let $G$ and $H$ be two graphs, let $u,v \in V(H)$, and let $\{x,y\} \in E(G)$. By \emph{replacing  $\{x,y\}$ by $H_{u,v}$} we mean deleting edge $\{x,y\}$ from $G$, adding a copy of $H$ and identifying vertices $u$ and $v$ of  $H$ with vertices $x$ and $y$ of $G$, respectively. The operation of \emph{replacing  $\{x,y\}$ by two copies of $H_{u,v}$} is defined similarly, except that we add two copies of $H$ and we identify vertices $u$ and $v$ of both copies of $H$ with vertices $x$ and $y$ of $G$, respectively. By \emph{attaching $H_u$ to} $x \in V(G)$ we mean adding a copy of $H$ and identifying vertex $u$ of $H$ with vertex $x$ of $G$, and by \emph{attaching a pendent $H_u$ to} $x \in V(G)$ we mean adding a copy of $H$ and an edge between vertex $u$ of $H$ and vertex $x$ of $G$. We denote by $H^2_u$  the graph obtained from two copies of $H$ by identifying vertex $u$ in each of the copies.

\begin{theorem}\label{thm:hard-Hcal}
Let $\Hcal$ be a collection of 2-connected graphs containing at least one non-complete graph. Then  \contracHcalone is \coNP-hard, for $\prec$ being any of the subgraph, induced subgraph, minor, or topological minor containment relations. Moreover, the problem cannot be solved in subexponential time assuming the \ETH, even restricted to graphs with maximum degree depending on $\Hcal$.
\end{theorem}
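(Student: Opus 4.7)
The plan is to reduce clean 3-\textsc{Sat} to the \emph{complement} of \contracHcalone. Given a clean formula $\varphi$ with $n$ variables, I would construct in polynomial time a graph $G_\varphi$ such that $\varphi$ is satisfiable if and only if $\tau^{\prec}_{\mathcal{H}}(G_\varphi/e) = \tau^{\prec}_{\mathcal{H}}(G_\varphi)$ for every $e \in E(G_\varphi)$; this yields \coNP-hardness. Fix once and for all a non-complete 2-connected graph $H \in \mathcal{H}$ together with two non-adjacent vertices $u,v \in V(H)$.

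The basic building block is the operation ``replace an edge $\{x,y\}$ of a skeleton graph $G_0$ by $H_{u,v}$'', identifying $u \equiv x$ and $v \equiv y$. Since $u,v$ are non-adjacent, $x$ and $y$ remain non-adjacent in $G_\varphi$, so the substitution behaves like a ``soft'' edge. The first technical step I would carry out is a \emph{rigidity lemma}: every subgraph, induced subgraph, minor, or topological minor of $G_\varphi$ isomorphic to some $H' \in \mathcal{H}$ is contained in a single $H$-gadget. This uses the 2-connectivity of $H'$ in an essential way: two neighbouring $H$-gadgets share at most one vertex (a $G_0$-terminal), which would be a cut vertex of any $\mathcal{H}$-copy spanning both; for minors and topological minors the same argument is applied to branch sets, which are connected. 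As a consequence, $\tau^{\prec}_{\mathcal{H}}(G_\varphi)$ reduces to a covering quantity on $G_0$ (in the cleanest instance, $\vc(G_0)$), because each $H$-gadget is most efficiently hit from a shared terminal.

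The skeleton $G_0$ is designed as a SAT-encoding. For each variable $x_i$, I would introduce a variable gadget whose two ``literal edges'' (one per polarity) are each turned into copies of $H_{u,v}$ (possibly in the two-copy variant $H^2_{u,v}$ when redundancy is needed); for each clause $C_j$, a clause gadget linking the three literal positions through a ``safety'' edge. The gadgets are calibrated so that $\tau^{\prec}_{\mathcal{H}}(G_\varphi) = N$ for a fixed integer $N$ depending only on $|\varphi|$, and so that a transversal of size $N-1$ in $G_\varphi/e$ exists if and only if $e$ collapses a redundantly-covered sub-gadget corresponding to an unsatisfied clause. The core computation is a case analysis on the position of $e$: either $e$ lies inside a single copy of $H$ or $e$ is incident to a terminal; in both cases the rigidity lemma ensures that contracting $e$ destroys exactly one $H$-copy and creates no new $\mathcal{H}$-copy across gadgets. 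Hence $\tau^{\prec}_{\mathcal{H}}(G_\varphi/e)$ equals the same covering quantity evaluated on $G_0$ with one edge removed, and the strict decrease $\tau^{\prec}_{\mathcal{H}}(G_\varphi/e) < \tau^{\prec}_{\mathcal{H}}(G_\varphi)$ becomes equivalent to the existence of an edge of $G_0$ whose deletion drops its vertex cover number, which the variable/clause calibration forces to coincide with unsatisfiability of $\varphi$.

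For the \ETH lower bound, the construction has $|V(G_\varphi)| = O(n)$ (clean 3-\textsc{Sat} has $m = O(n)$ clauses, and each variable occurs exactly three times), and the maximum degree of $G_\varphi$ is bounded by a constant depending only on $|V(H)|$, hence only on $\mathcal{H}$. Therefore a $2^{o(|V(G_\varphi)|)}$-time algorithm for \contracHcalone would give a $2^{o(n)}$-time algorithm for clean 3-\textsc{Sat}, contradicting the \ETH by Lemma~\ref{lem:clean3SAT}. I expect the main obstacle to be the uniform handling of all four containment relations in the rigidity lemma and in the contraction analysis: for induced subgraphs I must verify that no spurious non-edges between gadgets create new induced $\mathcal{H}$-copies, while for (topological) minors I must rule out branch sets meandering across gadgets; both are controlled by combining the 2-connectivity of members of $\mathcal{H}$ with the fact that any two gadgets share at most one vertex.
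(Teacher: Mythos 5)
Your overall architecture shares the paper's main ingredients: a reduction from clean 3-\textsc{Sat} to the complement of \contracHcalone, a skeleton graph built from the classical 3-\textsc{Sat}-to-\textsc{Vertex Cover} construction, replacement of skeleton edges by copies of $H_{u,v}$ with $u,v$ non-adjacent in a fixed non-complete $2$-connected $H\in\Hcal$, a ``rigidity'' argument from $2$-connectivity ensuring that occurrences of members of $\Hcal$ live inside single gadgets (for all four containment relations), doubling of copies for redundancy under one contraction, and the linear-size/bounded-degree accounting for the \ETH statement. All of that is consistent with the paper.

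However, there is a genuine gap in the direction ``$\varphi$ unsatisfiable $\Rightarrow$ some edge contraction strictly decreases $\tau_{\Hcal}^{\prec}$'', which is the heart of the reduction. You posit that the gadgets can be calibrated so that $\tau_{\Hcal}^{\prec}(G_{\varphi})=N$ for a fixed $N$ independent of satisfiability, and that the existence of a decreasing contraction reduces to ``some edge of the skeleton whose deletion drops its vertex cover number'', which you then assert coincides with unsatisfiability. Neither step is justified, and the first is at odds with the VC-style gadget you are borrowing: in that construction the transversal number itself distinguishes satisfiable from unsatisfiable formulas (it equals a baseline $8n-m$ exactly when $\varphi$ is satisfiable and strictly exceeds it otherwise), and edge deletion in the skeleton has no clear relation to edge contraction in the gadget graph. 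The paper's missing device is a \emph{pendent} copy of $H^2_u$ attached by a single pendent edge $\{z,s\}$ to an internal ``base vertex'' $z$ of each $AB$-copy of $H$. A normalization argument shows that any minimum hitting set of size exceeding the baseline can be assumed to contain both the base vertex $z$ of some $AB$-copy and its pendent neighbour $s$; contracting $\{z,s\}$ merges two forced hitting-set vertices into one, giving the strict decrease. Conversely, when $\varphi$ is satisfiable the packing lower bound ($2n$ disjoint doubled $A$-copies, $3n$ doubled pendent copies, plus the clause cliques) survives any single contraction because the doubled copies share only the non-adjacent attachment vertices, so no contraction can help. Without this pendent-edge mechanism (or an equivalent one) your reduction does not close, and the claim that ``contracting $e$ destroys exactly one $H$-copy'' also needs care: the point is rather that the contraction destroys at most one copy of each redundant pair, so the lower bound is unaffected.
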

\begin{proof}
We present a reduction from the 3-\textsc{Sat} problem restricted to clean formulas, which is \NP-hard by Lemma~\ref{lem:clean3SAT}. Namely, given a clean formula $\varphi$ with $n$ variables and $m$ clauses, we will construct in polynomial time an instance $G_{\varphi}^{\Hcal}$ such that $\varphi$ is satisfiable if and only if $G_{\varphi}^{\Hcal}$ is a \no-instance of \contracHcalone.
We start by constructing a graph $G_{\varphi}$ that will be reused in the other reductions of this section, and which is inspired by the classical \NP-hardness reduction~\cite{GareyJ79} from 3-\textsc{Sat} to \textsc{Vertex Cover}.

For each variable $x$ of $\varphi$ and for each clause $C$ containing $x$ in a literal $\ell \in \{x, \bar{x}\}$, we add to $G_{\varphi}$ a new vertex $a_{x,C, \ell}$. We also introduce another ``dummy'' vertex $a_x$. Since $\varphi$ is clean, we have introduced four vertices in  $G_\varphi$ for each variable $x$. Let $a_{x,C_1, \ell}$, $a_{x,C_2, \bar{\ell}}$, $a_{x,C_3, \ell}$, $a_{x}$ be the four introduced vertices (recall that $x$ appears at least once positively and negatively in $\varphi$). We add the following four edges, inducing a $C_4$: $(a_{x,C_1, \ell}, a_{x,C_2, \bar{\ell}})$, $(a_{x,C_2, \bar{\ell}}, a_{x,C_3, \ell})$, $(a_{x,C_3, \ell}, a_x)$, and $(a_x, a_{x,C_1, \ell})$. We denote by $A$ the union of all the vertices in these variable gadgets.

For each clause $C$ of $\varphi$ and for each literal $\ell$ in $C$, we add to $G_{\varphi}$ a new vertex $b_{C, \ell}$. Since $\varphi$ is clean, we have introduced two or three vertices in $G_{\varphi}$ for each clause $C$. We add an edge  between every pair of these vertices, hence inducing a clique of size two or three. We denote by $B$ the union of all the vertices in these clause gadgets.

Finally, for each variable $x$ of $\varphi$ and for each clause $C$ containing $x$ in a literal $\ell \in \{x, \bar{x}\}$, we add to $G_{\varphi}$ an edge between $a_{x,C, \ell} \in A$ and $b_{C, \ell} \in B$.
This concludes the construction of $G_{\varphi}$, which we proceed to modify. Note that $V(G_{\varphi})= A \cup B$.

Let $H \in \Hcal $ be a non-complete 2-connected graph, and let $u,v$ be two non-adjacent vertices in $H$. Starting from $G_{\varphi}$, we replace each of the edges between two vertices in $A$ or two vertices in $B$ by two copies of $H_{u,v}$, and each edge between a vertex in $A$ and a vertex in $B$ by one copy of $H_{u,v}$. Each of these copies of $H$ is called an \emph{$A$-copy}, \emph{$B$-copy}, or \emph{$AB$-copy}, depending on whether its attachment vertices are both in $A$, both in $B$, or one in $A$ and one in $B$, respectively.

Finally, for each $AB$-copy of $H$, we choose arbitrarily within it a vertex $z$ distinct from $u$ and $v$, and we attach a pendent copy of $H^2_u$ to $z$. These newly added copies of $H$ are called \emph{pendent copies}, the edge linking them to its corresponding $AB$-copy of $H$ is called the \emph{pendent edge} of that $AB$-copy of $H$, and the vertex in the $AB$-copy incident with the pendent edge is called the \emph{base vertex} of that $AB$-copy of $H$. This concludes the construction of $G_{\varphi}^{\Hcal}$; see Figure~\ref{fig:exampleC4} for an example for $\Hcal$ containing all cycles and $H = C_4$.

\begin{figure}[ht]
\begin{center}
\includegraphics[scale=1.12]{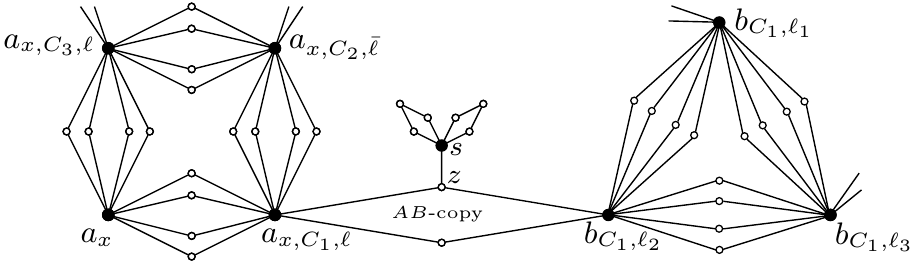}
\end{center}
\caption{Illustration of the graph $G_{\varphi}^{\Hcal}$ for $\Hcal$ containing all cycles and $H = C_4$. Black (resp. white) vertices are attachment (resp. internal) vertices of the corresponding copies of $H$. Vertex $z$ is the base vertex and $\{z,s\}$ is the pendent edge of the depicted $AB$-copy of $H$.}
\label{fig:exampleC4}
\end{figure}

We can clearly assume that $\Hcal$ is an antichain with respect to $\prec$, that is, that its elements are pairwise incomparable with respect to $\prec$.  Assume for the sake of presentation that $\prec$ is the subgraph relation, and we omit it from the notation $\tau_{\Hcal}^{\prec}$; at the end of the proof we will argue that the same arguments apply as well to the other containment relations listed in the statement of the theorem. Note that $G_{\varphi}^{\Hcal}$ contains $2n$ pairwise vertex-disjoint $A$-copies of $H$ and $3n$ pairwise vertex-disjoint pendent copies of $H$, taking into account that each variable appears exactly three times in $\varphi$. On the other hand, since for clause $C$ of $\varphi$ the vertices $\{ b_{C, \ell} \mid \ell \in C\}\subseteq B$ induce a clique in $G_{\varphi}$, for each clause $C$ of $\varphi$ at least $|C|-1$ vertices of $G_{\varphi}^{\Hcal}$ are needed to hit the $B$-copies of $H$ among the vertices  $\{ b_{C, \ell} \mid \ell \in C\}\subseteq B$, where $|C|$ denotes the number of literals in $C$. Therefore, using $\varphi$ that is clean, it follows that
\begin{equation}\label{eq:budget}
\tau_{\Hcal}(G_{\varphi}^{\Hcal})\ \geq\ 2n +  3n + \sum_{C \in \varphi}(|C|-1)\ =\ 8n - m.
\end{equation}
We present three claims that, together, will conclude the proof of the theorem.

\begin{claim}\label{claim:likeVC}
$\tau_{\Hcal}(G_{\varphi}^{\Hcal}) = 8n - m$ if and only if $\varphi$ is satisfiable.
\end{claim}
\begin{proof}
Suppose first that $\varphi$ is satisfiable, and let $\alpha$ be an assignment of the variables that satisfies all the clauses in $\varphi$. We define a set $X \subseteq V(G_{\varphi}^{\Hcal})$ as follows. For each variable $x$, add to $X$ all vertices $a_{x,C,\ell}$ such that $\alpha(\ell)$ is {\sf true}. If only one vertex was added in the previous step, add to $X$ vertex $a_x$ as well. For each clause $C$, choose a literal $\ell$ that satisfies $C$, and add to $X$ the set $\{b_{C,\ell'} \mid \ell' \neq \ell\}$. Finally, for each pendent copy of $H$, add to $X$ its corresponding vertex $u$. By construction we have that $|X|=8n - m$, hence by Equation~(\ref{eq:budget}) we just have to verify that $G_{\varphi}^{\Hcal} \setminus X$ does not contain any of the graphs in $\Hcal$ as a subgraph. Note that by the choice of $X$, it contains at least one vertex of each $A$-copy, $B$-copy, and pendent copy of $H$, and since $\alpha$ is a satisfying assignment, $X$ contains at least one vertex of each $AB$-copy of $H$ as well.
Let $F$ be a connected component of $G_{\varphi}^{\Hcal} \setminus X$. Since by hypothesis all the graphs in $\Hcal$ are $2$-connected, it suffices to verify that no $2$-connected component $F'$ of $F$ contains one of the graphs in $\Hcal$ as a subgraph. By the construction of $G_{\varphi}^{\Hcal}$, the choice of $X$, and the fact that $X$ hits all copies of $H$ in $G_{\varphi}^{\Hcal}$, it follows that such a $2$-connected component $F'$ is a {\sl proper} subgraph of $H$. Since $\Hcal$ is an antichain, $F'$ cannot contain any of the graphs in $\Hcal$ as a subgraph, and we are done.

\smallskip

Conversely, suppose that there exists $X \subseteq V(G_{\varphi}^{\Hcal})$ with $|X| \leq 8n-m$ such that $G_{\varphi}^{\Hcal} \setminus X$ does not contain any of the graphs in $\Hcal$, in particular $H$, as a subgraph. By Equation~(\ref{eq:budget}), we have that $|X| = 8n-m$.
By construction of $G_{\varphi}^{\Hcal}$ and the fact that $|X| = 8n-m$ (see the paragraph above Equation~(\ref{eq:budget})), it follows that $X$ must contain {\sl exactly one} of the pairs $\{ a_{x,C_1,\ell}, a_{x,C_3,\ell} \}$ and $\{ a_x,a_{x,C_2,\bar{\ell}} \}$ for each variable $x$, and {\sl exactly} $|C|-1$ vertices in $\{b_{C,\ell} \mid \ell \in C\}$ for each clause $C$. We define the following assignment $\alpha$ of the variables: for each variable $x$, let $\ell \in \{x, \bar{x}\}$ such that $a_{x,C,\ell} \in X$ for some clause $C$. Then we set $\alpha(x)$ to {\sf true} if $\ell = x$, and to {\sf false} if $\ell = \bar{x}$. By the above discussion, this is a valid assignment. Consider a clause $C$ of $\varphi$, and let $\ell$ be the literal in $C$ such that $b_{C,\ell} \notin X$. Since  $G_{\varphi}^{\Hcal} \setminus X$  does not contain $H$ as a subgraph, there must exist a variable $x \in \{\ell, \bar{\ell}\}$ such that $a_{x,C,\ell} \in X$, as otherwise the $AB$-copy of $H$ between $b_{C,\ell}$ and $a_{x,C,\ell}$
 would be an occurrence of $H$ in $G_{\varphi}^{\Hcal} \setminus X$. By the definition of $\alpha$, necessarily $\alpha(\ell)$ is {\sf true}, and therefore $\alpha$ satisfies $C$. Since this argument holds for every clause, we conclude that $\varphi$ is satisfiable.
\end{proof}

\begin{claim}\label{claim:no-edge}
If $\tau_{\Hcal}(G_{\varphi}^{\Hcal}) = 8n - m$, then there is no edge $e$ such that $\tau_{\Hcal}(G_{\varphi}^{\Hcal} / e) < \tau_{\Hcal}(G_{\varphi}^{\Hcal})$.
\end{claim}
\begin{proof}
Let $e \in E(G_{\varphi}^{\Hcal})$ be an arbitrary edge, and consider the graph $G_{\varphi}^{\Hcal} / e$. Since in $G_{\varphi}^{\Hcal}$ there are {\sl two} copies of $H$ between every pair of vertices in $A$ and $B$ within the same variable or clause gadget, respectively, {\sl two} copies of $H$ are attached to $u$ in every pendent $H_u^2$, and $u$ and $v$ are {\sl not} adjacent in $H$, it follows that, for each such a pair of copies of $H$,  at least one of them still survives in $G_{\varphi}^{\Hcal} / e$.

Therefore, $G_{\varphi}^{\Hcal} / e$ still contains $2n$ pairwise vertex-disjoint $A$-copies of $H$, $3n$ pairwise vertex-disjoint pendent copies of $H$, and, for each clause $C$ of $\varphi$, at least $|C|-1$ vertices of $G_{\varphi}^{\Hcal} / e$ are needed to hit the $B$-copies of $H$ among the vertices  $\{ b_{C, \ell} \mid \ell \in C\}\subseteq B$. Thus, as by hypothesis $\tau_{\Hcal}(G_{\varphi}^{\Hcal}) = 8n - m$, we have that
$$
\tau_{\Hcal}(G_{\varphi}^{\Hcal} / e)\ \geq\ 2n +  3n + \sum_{C \in \varphi}(|C|-1)\ =\ 8n - m \ = \ \tau_{\Hcal}(G_{\varphi}^{\Hcal}),
$$
and the claim follows.
\end{proof}

\begin{claim}\label{claim:edge}
If $\tau_{\Hcal}(G_{\varphi}^{\Hcal}) > 8n - m$, then there is an edge $e$ such that $\tau_{\Hcal}(G_{\varphi}^{\Hcal} / e) < \tau_{\Hcal}(G_{\varphi}^{\Hcal})$.
\end{claim}
\begin{proof}
Let $X \subseteq V(G_{\varphi}^{\Hcal})$ be an $\Hcal$-hitting set of minimum size. We call a vertex in a copy of $H$ in $G_{\varphi}^{\Hcal}$ \emph{internal} if it is distinct from its attachment vertices (cf. Figure~\ref{fig:exampleC4}). We proceed to construct another $\Hcal$-hitting set $X'  \subseteq V(G_{\varphi}^{\Hcal})$ with canonical properties, such that
\begin{enumerate}
\item\label{item:1} $|X'| \leq |X|$,
\item\label{item:2}  $X'$ contains exactly either $ \{a_{x,C_1,\ell}, a_{x,C_3,\ell} \}$ or $\{ a_x,a_{x,C_2,\bar{\ell}} \}$ for each variable $x$,
\item\label{item:3} $X'$ contains exactly $|C|-1$ vertices in the set $\{b_{C,\ell} \mid \ell \in C\}$ for each clause $C$,
\item\label{item:4} $X'$ contains exactly one vertex in each pair of pendent copies of $H$,
\item\label{item:5} $X'$ contains no internal vertex of an $A$-copy, $B$-copy, or pendent copy of $H$, and
\item\label{item:6} all internal vertices of $AB$-copies of $H$ that are in $X'$ are base vertices.
\end{enumerate}
Note that Property~\ref{item:1} above implies that $|X'| = |X| = \tau_{\Hcal}(G_{\varphi}^{\Hcal}) > 8n - m$.

We construct the set $X'$ via the following procedure:
\begin{enumerate}
\item\label{algoX:1} Start with $X' = X$.
\item\label{algoX:2} For each $A$-copy, $B$-copy, or pendent copy $\tilde{H}$ of $H$ in $G_{\varphi}^{\Hcal}$ such that $X'$ contains at least one internal vertex in $\tilde{H}$, remove from $X'$ all internal vertices of $\tilde{H}$, and add to $X'$ any of the attachment vertices of $\tilde{H}$, which may already be in $X'$.
\item\label{algoX:3} For each variable $x$, let $X'_x = X' \cap \{a_{x,C_1,\ell}, a_{x,C_2,\bar{\ell}}, a_{x,C_3,\ell}, a_x\}$. If $|X'_x| \geq 3$, let $P$ be one of the pairs $\{ a_{x,C_1,\ell}, a_{x,C_3,\ell} \}$ and $\{ a_x,a_{x,C_2,\bar{\ell}} \}$ such that $P \subseteq X'$. Remove $X'_x \setminus P$ from $X'$ and, for every vertex $v \in X'_x \setminus P$ with $v \neq a_x$, add to $X'$ an arbitrarily chosen internal vertex in the $AB$-copy of $H$ containing $v$, which may already be in $X'$.
\item\label{algoX:4} For each clause $C$, let $X'_C = X' \cap \{ b_{C, \ell} \mid \ell \in C\}$. Note that by construction of $G_{\varphi}^{\Hcal}$ and Step~\ref{algoX:2} above, $|X'_C| \geq |C| -1$. If $|X'_C| = |C|$, remove from $X'$ an arbitrarily chosen vertex $v \in \{ b_{C, \ell} \mid \ell \in C\}$, and add to $X'$ an arbitrarily chosen internal vertex in the $AB$-copy of $H$ containing $v$, which may already be in $X'$.
\item\label{algoX:5} For each $AB$-copy $\tilde{H}$ of $H$ in $G_{\varphi}^{\Hcal}$ such that $X'$ contains at least one internal vertex in $\tilde{H}$, remove  from $X'$ all internal vertices of $\tilde{H}$, and add to $X'$ the base vertex of $\tilde{H}$, which may already be in $X'$.
\end{enumerate}
\noindent Let $X'$ be the set obtained at the end of the above procedure. It can be easily verified that  $X'$ satisfies the desired Properties~\ref{item:1}-\ref{item:6}. In order to see that $X'$ is a $\Hcal$-hitting set, note that, by construction of $G_{\varphi}^{\Hcal}$, each vertex in $A \cup B$ is contained in at most one $AB$-copy of $H$. Thus, in Steps~\ref{algoX:3} and~\ref{algoX:4} of the above procedure, when we swap vertices in $A \cup B$ by internal vertices in $AB$-copies of $H$, we guarantee that the currently constructed set $X'$ is still a $\Hcal$-hitting set. Clearly, this property is also preserved in Steps~\ref{algoX:2} and~\ref{algoX:5}.

\medskip

We now proceed, using the constructed $\Hcal$-hitting set $X'$, to identify an edge $e^{\star} \in E(G_{\varphi}^{\Hcal})$ such that $\tau_{\Hcal}(G_{\varphi}^{\Hcal} / e^{\star}) < |X'| = \tau_{\Hcal}(G_{\varphi}^{\Hcal})$, concluding the proof of the claim. Since by hypothesis $|X'| = \tau_{\Hcal}(G_{\varphi}^{\Hcal}) \geq 8n-m + 1$, Properties~\ref{item:1}-\ref{item:6} of $X'$ imply that $X'$ contains at least one base vertex $z$ in an $AB$-copy $\tilde{H}$ of $H$. Let $s$ be the vertex in the pendent copies of $H$ such that $\{z,s\} \in E(G_{\varphi}^{\Hcal})$; hence $\{z,s\}$ is the pendent edge of $\tilde{H}$ (cf. Figure~\ref{fig:exampleC4}). By Property~\ref{item:5} of $X'$, it follows that $z \in X'$. Let $e^{\star} = \{z,s\}$, and let $w$ be the vertex in $G_{\varphi}^{\Hcal} / e^{\star}$ resulting from the contraction of $e^{\star}$. Since both $z,s \in X'$, it can be easily verified that the set $X^{\star} := (X' \setminus \{z,s\}) \cup \{w\}$ is a $\Hcal$-hitting set of $G_{\varphi}^{\Hcal} / e^{\star}$ with $|X^{\star}| = |X'| -1$.
 Therefore,
 $$
\tau_{\Hcal}(G_{\varphi}^{\Hcal} / e^{\star})\  \leq\  |X^{\star}| \ <\ |X'|\ = \ \tau_{\Hcal}(G_{\varphi}^{\Hcal}),
 $$
 and the claim follows.
 \end{proof}

Claims~\ref{claim:likeVC},~\ref{claim:no-edge}, and~\ref{claim:edge} together imply that $\varphi$ is satisfiable if and only if $G_{\varphi}^{\Hcal}$ is a \no-instance of \contracHcalone for the subgraph relation, as we wanted to prove.

\smallskip

Let us now argue that the same proof applies when $\prec$ is another of the graph containment relations stated in the theorem. Indeed, by construction of $G_{\varphi}^{\Hcal}$, the hypothesis that all the graphs in $\Hcal$ are $2$-connected, and the fact that $\Hcal$ is an antichain, it follows that if $X$ is an $H$-hitting set for some of these containment relations, none of the graphs in $\Hcal$ occurs in
$G_{\varphi}^{\Hcal} \setminus X$ nor in $(G_{\varphi}^{\Hcal} / e) \setminus X$ for any edge $e$, for any of the subgraph, induced subgraph, minor, or topological minor containment relations.

\smallskip

Finally, the latter statement in the theorem follows easily from Lemma~\ref{lem:clean3SAT} and by observing that, since the construction of $G_{\varphi}^{\Hcal}$ depends on a {\sl fixed} graph $H \in \Hcal$, it follows that $|V(G_{\varphi}^{\Hcal})| = \Ocal (n)$ and $\Delta(G_{\varphi}^{\Hcal}) \leq 5 \cdot \Delta(H)$.
\end{proof}

From Theorem~\ref{thm:hard-Hcal} we immediately get the following corollary.

\begin{corollary}\label{cor:hard-fvs-oct}
\contracpione is \coNP-hard if $\pi=\fvs$ or $\pi=\oct$.
\end{corollary}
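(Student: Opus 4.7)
The plan is to realize both $\fvs$ and $\oct$ as $\Hcal$-transversal numbers $\tau_{\Hcal}^{\prec}$ for suitable choices of a collection $\Hcal$ and a containment relation $\prec$ that fit the hypotheses of Theorem~\ref{thm:hard-Hcal}, and then invoke that theorem as a black box. Under such an instantiation, the problem \contracHcalone is exactly \contracpione for $\pi \in \{\fvs, \oct\}$, so \coNP-hardness transfers immediately.

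For $\fvs$, I would take $\prec$ to be the subgraph containment relation and $\Hcal = \{C_h \mid h \geq 3\}$, the family of all cycles. By definition a feedback vertex set of a graph $G$ is a set whose removal makes $G$ acyclic, equivalently a set that hits every cycle as a subgraph, so $\tau_{\Hcal}^{\prec}(G) = \fvs(G)$ for every graph~$G$. Every cycle $C_h$ with $h \geq 3$ is 2-connected, and $C_4 \in \Hcal$ is a non-complete 2-connected graph, so $\Hcal$ satisfies the hypotheses of Theorem~\ref{thm:hard-Hcal} and the \coNP-hardness of \contracpione for $\pi = \fvs$ follows.

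For $\oct$, I would analogously take $\prec$ to be the subgraph containment relation and $\Hcal = \{C_{2h+1} \mid h \geq 1\}$, the family of all odd cycles; the preliminaries explicitly record that $\tau_{\Hcal}^{\prec}(G) = \oct(G)$ for every graph~$G$. Every odd cycle is 2-connected, and $C_5 \in \Hcal$ is non-complete, so Theorem~\ref{thm:hard-Hcal} again applies and yields \coNP-hardness for $\pi = \oct$.

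There is essentially no obstacle: the only things to check are that each chosen $\Hcal$ genuinely realizes the target invariant and that $\Hcal$ meets the hypotheses of Theorem~\ref{thm:hard-Hcal} (a collection of 2-connected graphs with at least one non-complete member), both of which are immediate from standard definitions.
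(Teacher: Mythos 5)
Your proposal is correct and matches the paper's own proof: it instantiates Theorem~\ref{thm:hard-Hcal} with $\Hcal$ the collection of all cycles (resp.\ all odd cycles) under the subgraph relation, which is exactly what the paper does. The extra verification that these families are 2-connected and contain a non-complete member is accurate but routine.
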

\begin{proof}
For $\pi=\fvs$ (resp. $\pi=\oct$), we apply Theorem~\ref{thm:hard-Hcal} for $\Hcal$ being the collection of all cycles (resp. odd cycles) and $\prec$ being the subgraph relation.
\end{proof}

Note that we can also obtain hardness results assuming that the input graph of the considered problem is planar (and some graph in $\Hcal$ as well), by reducing from planar versions of 3-\textsc{Sat}.

\smallskip

More interesting is the fact the proof of Theorem~\ref{thm:hard-Hcal} does {\sl not} work if either all the graphs in $\Hcal$ are cliques, or if $\Hcal$ contains some graph that is not $2$-connected. Indeed, in the proof of Claim~\ref{claim:no-edge} we crucially used the fact that the vertices $u,v \in V(H)$ are {\sl not} adjacent, so that the contraction of any edge $e$ still leaves intact one of each pair of copies of $H$ in $G_{\varphi}^{\Hcal}/e$. On the other hand, if $\Hcal$ contains a graph $H'$ that is {\sl not} $2$-connected, Claim~\ref{claim:likeVC} does not hold anymore: such a graph $H'$ may occur in the graph $G_{\varphi}^{\Hcal} \setminus X$ considered in the first part of the proof, hence $X$ may not be an  $\Hcal$-hitting set of $G_{\varphi}^{\Hcal}$ anymore.

\smallskip

We now present two hardness results for families $\Hcal$ in which we drop one of the two assumptions discussed above,  namely complete graphs and families containing paths.

 In the next theorem we prove, using a simple trick, \coNP-hardness when $\Hcal$ consists of complete graphs, for the  minor and topological minor containment relations. Note that we may assume that the complete graphs have at least three vertices, as otherwise the problem can be solved in polynomial time by Theorem~\ref{thm:vc-XP}. 

\begin{theorem}\label{thm:hard-cliques}
Let $\Hcal$ be a collection of cliques, each having at least three vertices. Then \contracHcalone is \coNP-hard, for $\prec$ being the minor or topological minor containment relations.
\end{theorem}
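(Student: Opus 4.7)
The plan is to reduce from clean 3-\textsc{Sat} exactly as in the proof of Theorem~\ref{thm:hard-Hcal}, but applied to a carefully chosen \emph{non-complete} $2$-connected graph $H$ that still captures the cliques in $\Hcal$ under $\prec$. Let $t_0 := \min\{t : K_t \in \Hcal\}$ (so $t_0 \geq 3$ by hypothesis), and define $H$ as the graph obtained from $K_{t_0}$ by subdividing a single edge $\{a_1,a_2\}$, denoting the new degree-$2$ vertex by $w$. Since $t_0 \geq 3$ there exists $v \in V(K_{t_0})\setminus\{a_1,a_2\}$; set $u := w$. Then $u$ and $v$ are non-adjacent in $H$, the graph $H$ is $2$-connected, and $H$ contains $K_{t_0}$ as a topological minor (obtained by suppressing $w$), hence also as a minor. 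A direct case analysis shows that removing \emph{any} single vertex of $H$ destroys every $K_{t_0}$-minor and every $K_{t_0}$-topological-minor in $H$; in short, $H$ is \emph{vertex-critical} for $K_{t_0}$ under $\prec$.

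Since $K_{t_0}$ is contained (as a minor, and as a topological minor) in every $K_t$ with $t\geq t_0$, one has $\tau^{\prec}_{\Hcal}(G) = \tau^{\prec}_{K_{t_0}}(G)$ for every graph $G$, so the reduction only needs to reason with a single target $K_{t_0}$. The graph $G_\varphi^{\Hcal}$ is constructed exactly as in the proof of Theorem~\ref{thm:hard-Hcal}, using this $H$ together with the above $u,v$. The counting lower bound of Equation~(\ref{eq:budget}) then transfers verbatim: the $2n$ $A$-copies, the $3n$ pairwise vertex-disjoint pendent copies, and the $\sum_{C\in\varphi}(|C|-1)$ vertex-disjoint $B$-copies of $H$ each contains $K_{t_0}$ as a (topological) minor, so $\tau^{\prec}_{\Hcal}(G_\varphi^{\Hcal}) \geq 8n-m$. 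Moreover, the proofs of the analogues of Claims~\ref{claim:no-edge} and~\ref{claim:edge} go through word for word, because they rely only on the $2$-connectedness of $H$, the non-adjacency of $u$ and $v$ in $H$, and the combinatorial manipulations of the transversal $X$ into $X'$; none of these arguments uses the containment relation explicitly.

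The main obstacle is the analogue of Claim~\ref{claim:likeVC}, and specifically its backward direction: given a satisfying assignment $\alpha$ of $\varphi$, the canonical set $X$ built from $\alpha$ must eliminate all $K_{t_0}$-(topological-)minors of $G_\varphi^{\Hcal}$, and not only each individual copy of $H$. Because $K_{t_0}$ is $2$-connected (as $t_0\geq 3$), every such minor is confined to a single $2$-connected component (block) of $G_\varphi^{\Hcal} \setminus X$. By the construction of $G_\varphi^{\Hcal}$ and the choice of $X$, every block of $G_\varphi^{\Hcal}\setminus X$ is either a proper subgraph of a single copy of $H$, or the union of two such subgraphs sharing at most one attachment vertex (the latter occurring when two copies of $H$ sit on the same $C_4$-edge of a variable gadget, or inside a pendent $H^2_u$, and exactly one of the two shared attachments lies in $X$). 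In every case the block is a subgraph of either $K_{t_0}-e$ or $\tilde{K}_{t_0}\setminus\{z\}$ for some vertex $z$, and hence contains no $K_{t_0}$-(topological-)minor by the vertex-criticality of $H$ established above. The forward direction of Claim~\ref{claim:likeVC} is then handled exactly as before: the tightness of Equation~(\ref{eq:budget}) forces $X$ to have a canonical structure, and the requirement that $X$ hit every $AB$-copy of $H$ translates into a satisfying assignment of $\varphi$. Finally, combining these analogues with the unchanged arguments of Claims~\ref{claim:no-edge} and~\ref{claim:edge} yields the required equivalence and hence the \coNP-hardness of \contracHcalone under either $\prec$.
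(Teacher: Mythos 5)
Your proposal is correct and follows essentially the same route as the paper: both escape the completeness obstruction by subdividing the clique so as to obtain a $2$-connected, non-complete graph with a non-adjacent pair $u,v$, and then observe that in the resulting gadget graph every $K_{t_0}$-(topological-)minor is confined to a single copy of the subdivided clique, so the machinery of Theorem~\ref{thm:hard-Hcal} applies. The only difference is one of packaging — the paper subdivides \emph{every} edge and invokes Theorem~\ref{thm:hard-Hcal} as a black box for the subgraph relation before translating back to minors, whereas you subdivide a single edge and re-verify the three claims directly for the (topological) minor relation — and your one imprecision (a ``block'' formed by two pieces sharing a single attachment vertex cannot be $2$-connected, so that case never arises) does not affect the argument.
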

\begin{proof}
Let $\Hcal^{\bullet} = \{H^{\bullet} \mid H \in \Hcal\}$, where $H^{\bullet}$ is the graph obtained from $H$ by subdividing each edge once. (If $\Hcal$ is the minor relation, we may assume that $\Hcal$ contains only one clique.) Since all the graphs in $\Hcal$ are cliques on at least three vertices, $\Hcal^{\bullet}$ is a collection of $2$-connected graphs none of which is a clique, hence Theorem~\ref{thm:hard-Hcal} can be applied to it. Given a clean
3-\textsc{Sat} formula $\varphi$, let $G_{\varphi}^{\Hcal^{\bullet}}$ be the graph constructed in the proof of Theorem~\ref{thm:hard-Hcal}  for the family $\Hcal^{\bullet}$. Claims~\ref{claim:likeVC},~\ref{claim:no-edge}, and~\ref{claim:edge} together imply that $\varphi$ is satisfiable if and only if $G_{\varphi}^{\Hcal^{\bullet}}$ is a \no-instance of \contracHcalonebullet for the subgraph relation.
The important observation is that, by construction, in both $G_{\varphi}^{\Hcal^{\bullet}}$ and $G_{\varphi}^{\Hcal^{\bullet}}/e$ for any edge $e$, the gadgets that we attach to a pair of vertices, as well as their proper subgraphs once we delete a $\Hcal^{\bullet}$-hitting set, contain a graph in $\Hcal^{\bullet}$ as a subgraph if and only if they contain one of the cliques in $\Hcal$ as a minor or as a topological minor, and the same proof yields the claimed result.
\end{proof}

Note that the proof of Theorem~\ref{thm:hard-cliques} does not work for the subgraph or induced subgraph containment relations: in that case, the constructed graph $G_{\varphi}^{\Hcal^{\bullet}}$ does not contain any clique of size at least three.

\smallskip

In our next theorem we change appropriately the construction of the graph $G_{\varphi}^{\Hcal}$ defined in the proof of Theorem~\ref{thm:hard-Hcal} to obtain a hardness result when $\Hcal$ consists of a path on at least four vertices and any collection of $2$-connected graphs, for any of the containment relations discussed above.

\begin{theorem}\label{thm:hard-paths}
Let $H=P_i$ with $i \geq 4$, and let $\Hcal$ contain $H$ and any collection of $2$-connected graphs. Then \contracHone is \coNP-hard, for $\prec$ being any of the subgraph, induced subgraph, minor,  or topological minor containment relations.
\end{theorem}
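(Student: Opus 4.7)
The reduction follows the same blueprint as in Theorem~\ref{thm:hard-Hcal}, reducing from clean 3-\textsc{Sat}: from a clean formula $\varphi$ with $n$ variables and $m$ clauses I would build in polynomial time a graph $G'_\varphi$, and prove three claims analogous to Claims~\ref{claim:likeVC},~\ref{claim:no-edge}, and~\ref{claim:edge} that together show that $\varphi$ is satisfiable if and only if $G'_\varphi$ is a \no-instance. The template $G_\varphi$ with its $C_4$ variable gadgets, clique clause gadgets, $AB$-edges and pendent $H^2_u$-copies is reused, and since the two endpoints $u,v$ of $P_i$ are non-adjacent for $i\ge 4$, each edge of the base graph can again be replaced by one or two copies of $P_i$ identified at its endpoints, with pendent $P_i^2$-structures attached as before.

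The main obstacle is that $P_i$ is \emph{not} $2$-connected, so the forward direction of Claim~\ref{claim:likeVC} does not go through directly: once the attachment vertices in $X\cap(A\cup B)$ prescribed by a satisfying assignment are removed, two $P_{i-1}$ fragments of the same gadget that meet at the non-removed attachment vertex already combine into a $P_{2i-3}$, which contains $P_i$. I would fix this by enlarging the canonical hitting set with one ``median'' internal vertex per copy of $P_i$, chosen so that the two remaining fragments have at most $\lfloor i/2\rfloor$ vertices each. Since the internal vertices of a gadget have degree $2$, the only way for a simple path in $G'_\varphi\setminus X$ to cross from one gadget to another is via a shared attachment vertex, which can happen at most once along a simple path; combining the two adjacent fragments then yields a path on at most $2\lfloor i/2\rfloor-1\le i-1<i$ vertices, ruling out any $P_i$. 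Adding one median per copy of $P_i$ shifts the target budget by a fixed quantity that depends only on the construction and not on $\varphi$, so the SAT encoding carried by the attachment-vertex choices is preserved.

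With this modified budget, the three claims transfer with minor adjustments. For the analogue of Claim~\ref{claim:likeVC}, one direction exhibits the above canonical hitting set from a satisfying assignment and uses the fragment-length bound to rule out any $P_i$ in $G'_\varphi\setminus X$; the converse packs pairwise vertex-disjoint copies of $P_i$ (two per $A$-edge gadget and similarly for $B$-edge gadgets, one per pendent copy, plus one forced by each median's local structure) and shows that the target budget is reached only when $\varphi$ is satisfiable. Claim~\ref{claim:no-edge}'s analogue relies on the non-adjacency of $u,v$ in $P_i$ exactly as in the previous proof to conclude that, for each pair of copies of $P_i$, at least one survives the contraction of any edge, so the lower bound is preserved; Claim~\ref{claim:edge}'s analogue identifies the pendent edge of some $AB$-copy whose contraction merges two vertices of $X$, saving one hit. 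To accommodate the extra $2$-connected graphs possibly in $\Hcal$, the subdivision parameters inside each gadget are chosen so that every $2$-connected block of $G'_\varphi$ (and of $G'_\varphi/e$) is a cycle whose length avoids the relevant graphs in $\Hcal$, so these impose no additional constraint. The main technical effort lies in the fragment-length bookkeeping and the case analysis needed by the forward direction of Claim~\ref{claim:likeVC}.
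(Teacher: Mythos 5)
There is a genuine gap: the ``median per copy'' fix does not repair the doubled-copy template, and the failure is structural rather than a matter of bookkeeping. First, consider where the medians go. If every copy of $P_i$ receives a median, including the $AB$-copies, then each $AB$-copy is hit by its median irrespective of the assignment, so the target budget becomes attainable whether or not $\varphi$ is satisfiable and the reverse direction of your first claim collapses; moreover, with all $A$- and $B$-copies pre-cut by medians, two fragments meeting at a shared attachment vertex have at most $2\lfloor i/2\rfloor-1<i$ vertices, so nothing in the graph forces any attachment vertex into an optimal hitting set and the encoding is gone. If instead you exempt the $AB$-copies (as you must to keep the encoding), the forward direction fails: a satisfying assignment hits the $AB$-copy between $a_{x,C,\ell}$ and $b_{C,\ell}$ at only one endpoint, leaving a fragment on $i-1$ vertices ending at the other endpoint, and gluing that fragment to a balanced-median fragment of an incident $A$- or $B$-copy (up to $\lfloor i/2\rfloor$ vertices including the shared endpoint) produces a path on $(i-1)+\lfloor i/2\rfloor-1\geq i$ vertices for every $i\geq 4$, i.e., a surviving $P_i$. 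Your bound $2\lfloor i/2\rfloor-1\leq i-1$ only covers the case where \emph{both} fragments come from median-cut copies, which is exactly the case that cannot hold at an $AB$-copy. There is also a budget-tightness problem: each doubled pair of copies is a cycle $C_{2(i-1)}$ whose $P_i$'s can be hit by two well-placed internal vertices, whereas your canonical solution spends an attachment vertex plus two medians on it; since the lower bound per gadget no longer matches the canonical cost, the tightness argument that forces the normalized structure in the analogues of Claims~\ref{claim:likeVC} and~\ref{claim:edge} breaks down.

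The paper resolves all of this by redesigning the gadgets instead of enlarging the hitting set: edges inside $A$ and inside $B$ are replaced by a \emph{single} path on roughly $i/2$ vertices, a dangling path on roughly $i/2$ vertices is attached to \emph{every} vertex of $A\cup B$ (these dangling paths are what force the attachment vertices into any optimal hitting set, since a dangling half-path concatenated through an unhit attachment vertex with a replacement half-path already contains a $P_i$), and each $AB$-edge becomes a $P_3$ whose middle vertex carries the pendent $P_i^2$. All surviving fragments then have about $i/2$ vertices, no two concatenate into a $P_i$, and the budget analysis of Theorem~\ref{thm:hard-Hcal} transfers. A secondary point: your plan to choose subdivision lengths so that the cycles of the construction ``avoid'' the $2$-connected graphs of $\Hcal$ cannot work when $\Hcal$ is infinite (e.g., all cycles); the correct observation is that every cycle of the construction has length at least $i$, hence contains a $P_i$ and is destroyed by any $P_i$-hitting set, so $G_{\varphi}^{\Hcal}\setminus X$ is a forest and the $2$-connected members of $\Hcal$ impose no further constraint.
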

\begin{proof}
We present again a reduction from the 3-\textsc{Sat} problem restricted to clean formulas, similar in spirit to that of Theorem~\ref{thm:hard-Hcal}. Given a clean formula $\varphi$ with $n$ variables and $m$ clauses, we will construct in polynomial time an instance $G_{\varphi}^{\Hcal}$ such that $\varphi$ is satisfiable if and only if $G_{\varphi}^{\Hcal}$ is a \no-instance of \contracHcalone.
We start with the same graph $G_{\varphi}$ defined in the proof of Theorem~\ref{thm:hard-Hcal}, and we modify it as follows. In order to define the graph $G_{\varphi}^{\Hcal}$, we distinguish two cases according to the parity of $i$, the number of vertices in $H=P_i$. In
 this proof, whenever we attach a path, we choose as attachment vertices the endvertices of the path.

  If $i \geq 4$ is even, we replace each of the edges between two vertices in $A$ or two vertices in $B$ by a $P_{\frac{i}{2}+1}$, and we attach a $P_{\frac{i}{2}}$ to each vertex in $A \cup B$.

  If $i \geq 5$ is odd, we replace each of the edges between two vertices in $A$ or two vertices in $B$ by a $P_{\frac{i+1}{2}}$, and we attach a $P_{\frac{i+1}{2}}$ to each vertex in $A \cup B$.

  The remainder of the construction of $G_{\varphi}^{\Hcal}$ is the same for both $i$ even and odd. We replace each edge between a vertex in $A$ and a vertex in $B$ by a $P_{3}$ (note that this does {\sl not} depend on $i$) and, for each such a $P_{3}$, let $z$ be the internal vertex in it. Attach a pendent copy of $H_u^2$ to $z$, where $H=P_i$ and $u$ is one of the endvertices of $P_i$.
This concludes the construction of $G_{\varphi}^{\Hcal}$; see Figure~\ref{fig:examplePi} for examples for $H=P_4$  and $H=P_5$.

\begin{figure}[ht]
\begin{center}
\includegraphics[scale=1.05]{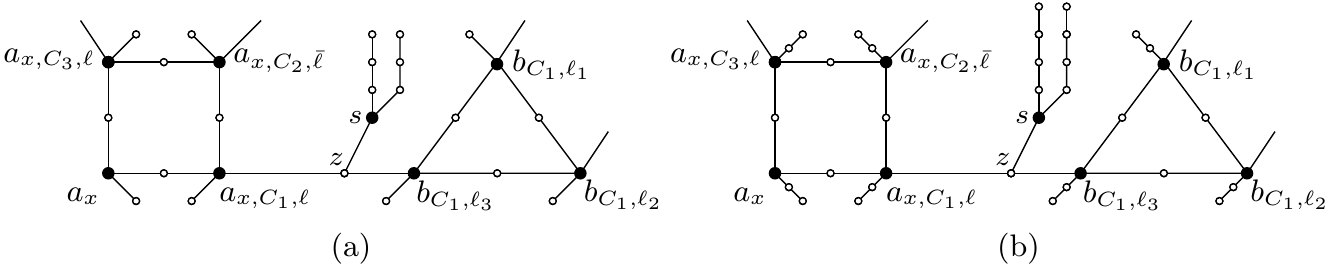}
\end{center}
\caption{Illustration of the graph $G_{\varphi}^{\Hcal}$ for (a) $H=P_4$ and (b) $H=P_5$.}
\label{fig:examplePi}
\end{figure}

 Again, suppose first that $\prec$ is the subgraph relation. The main properties of $G_{\varphi}^{\Hcal}$ are the same as in Theorem~\ref{thm:hard-Hcal}: we may assume that a minimum $P_i$-hitting set $X \subseteq V(G_{\varphi}^{\Hcal})$
  contains exactly one of the pairs $\{ a_{x,C_1,\ell}, a_{x,C_3,\ell} \}$ and $\{ a_x,a_{x,C_2,\bar{\ell}} \}$ for each variable $x$, that $X$ contains exactly $|C|-1$ vertices in the set $\{ b_{C, \ell} \mid \ell \in C\}$ for each clause $C$, and that $X$ contains precisely the attachment vertex of every copy of $H_u^2$ (cf. vertices $s$ in Figure~\ref{fig:examplePi}). Note that since all the graphs in $\Hcal \setminus \{P_i\}$ are $2$-connected by hypothesis, and for any $P_i$-hitting set $X$ every connected component of $G_{\varphi}^{\Hcal} \setminus X$ is a tree, it follows that any $P_i$-hitting set of $G_{\varphi}^{\Hcal}$ is also an $\Hcal$-hitting set of $G_{\varphi}^{\Hcal}$. Moreover, by construction of $G_{\varphi}^{\Hcal}$, these properties are preserved in $G_{\varphi}^{\Hcal} / e$ for any edge $e \in E(G_{\varphi}^{\Hcal})$.

  Taking into account the above discussion, it can be verified that the current graph $G_{\varphi}^{\Hcal}$ satisfies Claims~\ref{claim:likeVC},~\ref{claim:no-edge}, and~\ref{claim:edge} in the proof of Theorem~\ref{thm:hard-Hcal}, by using the same arguments; we omit the details. Thus, $\varphi$ is satisfiable if and only if $G_{\varphi}^{\Hcal}$ is a \no-instance of \contracHcalone for the subgraph relation, as we wanted to prove. As for other containment relations, let $X$ be a $P_i$-hitting set for some of the considered relations.  Since $P_i$ or any $2$-connected graph in $\Hcal$ does not occur in $G_{\varphi}^{\Hcal} \setminus X$ nor in $(G_{\varphi}^{\Hcal}/e) \setminus X$ for any edge $e$,  for any of the subgraph, induced subgraph, minor, or topological minor relations, the same arguments apply.
  \end{proof}

Note that the proof of Theorem~\ref{thm:hard-paths} does {\sl not} work for $H=P_3$. Indeed, in the construction of $G_{\varphi}^{\Hcal}$ for odd $i$, we replace  the edges with both endvertices in $A$ or in $B$ by a $P_{\frac{i+1}{2}}$; for $i=3$ this results in an edge between such a pair, whose contraction would identify both vertices, hence violating the main properties of the reduction.

Finally, note also that, as in Theorem~\ref{thm:hard-Hcal}, the reductions given in
Theorem~\ref{thm:hard-cliques} and Theorem~\ref{thm:hard-paths} also rule out the existence of subexponential algorithms assuming the \ETH.

\section{The case of \textsc{Vertex Cover}}
\label{sec:vertex-cover}

In this section we focus on the case where the considered invariant $\pi$ is the size of a minimum vertex cover or, equivalently, where $\pi = \tau_{K_2}^{\prec}$ for $\prec$ being any of the subgraph, induced subgraph, minor, or topological minor containment relations. Recall that we use the notation $\vc$ to denote  $\tau_{K_2}^{\prec}$.  It is easy to see that \contracvc is \NP-hard, even if we assume that the value $\vc(G)$ is given along with the input. Indeed, the particular case $d=\vc(G)-1$ is the problem of reducing the vertex cover number of the (connected) input graph $G$ to one (i.e., obtaining a star) by doing at most $k$ edge contractions. This problem is known in the literature as \textsc{Star Contraction}~\cite{KrithikaM0T16,HeggernesHLLP14} and is equivalent to \textsc{Connected Vertex Cover} (see~\cite{KrithikaM0T16} for a proof), which is known to be \NP-hard even on graphs for which computing a minimum vertex cover can be done in polynomial time, such as bipartite graphs~\cite{EscoffierGM10}.

 \smallskip

 Following Heggernes et al.~\cite{HeggernesHLP13},  a \emph{$2$-coloring} of a graph $G$ is a function $\phi: V(G) \to \{1,2\}$, and we denote by $V_{\phi}^1$ and $V_{\phi}^2$ the sets of vertices of $V(G)$ colored 1 and 2, respectively. A set $X \subseteq V(G)$ is a \emph{monochromatic component} of $\phi$ if $G[X]$ is a connected component of $G[V_{\phi}^1]$ or $G[V_{\phi}^2]$, and we denote by $\Mcal_{\phi}$ the set of all monochromatic components of $\phi$. The \emph{cost} of a $2$-coloring $\phi$ is defined as ${\sf cost}(\phi) = \sum_{X \in \Mcal_{\phi}} (|X| - 1)$. We will need the following lemma.

\begin{lemma}[Heggernes et al.~\cite{HeggernesHLP13}]
\label{lem:2coloring}
A graph $G$ has a $2$-coloring of cost at most $k$ if and only if there exists a
set $F \subseteq E(G)$ of at most $k$ edges such that $G/F$ is bipartite.
\end{lemma}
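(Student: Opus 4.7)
The plan is to prove both directions by exhibiting an explicit correspondence between low-cost $2$-colorings and small edge sets whose contraction yields a bipartite graph. The key bookkeeping identity is that a connected graph on $n$ vertices has at least $n-1$ edges (with equality for a tree), which matches exactly the definition of cost.

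For the forward direction, suppose $\phi$ is a $2$-coloring with ${\sf cost}(\phi)\leq k$. For each monochromatic component $X\in\Mcal_{\phi}$, the induced subgraph $G[X]$ is connected by definition, so I pick a spanning tree $T_X$ of $G[X]$, and let $F=\bigcup_{X\in\Mcal_{\phi}} E(T_X)$. Then $|F|=\sum_{X}(|X|-1)={\sf cost}(\phi)\leq k$. To see that $G/F$ is bipartite, observe that contracting all edges of $F$ collapses each monochromatic component $X$ to a single vertex $v_X$; colour $v_X$ by $\phi(X)$. Any surviving edge of $G/F$ comes from an edge $\{u,v\}\in E(G)$ whose endpoints lie in two distinct monochromatic components, which forces $\phi(u)\neq\phi(v)$ (otherwise $u$ and $v$ would be adjacent and monochromatic, hence in the same component). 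Thus the inherited colouring is a proper $2$-colouring of $G/F$.

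For the reverse direction, suppose $F\subseteq E(G)$ with $|F|\leq k$ and $G/F$ bipartite. Let $\psi$ be a proper $2$-colouring of $G/F$, and let $C_1,\ldots,C_t$ be the connected components of the subgraph $(V(G),F)$ (counting isolated vertices of this subgraph as singleton components). Each $C_i$ is collapsed into a single vertex $v_i'$ of $G/F$, so define $\phi:V(G)\to\{1,2\}$ by $\phi(v)=\psi(v_i')$ whenever $v\in V(C_i)$. The main claim is that the monochromatic components of $\phi$ are exactly the $V(C_i)$. Indeed, each $V(C_i)$ is monochromatic and connected in $G$ (via edges of $F$), so it is contained in some monochromatic component; conversely, if $V(C_i)$ and $V(C_j)$ lie in the same monochromatic component with $i\neq j$, there must be a $G$-edge between them, which would become an edge $\{v_i',v_j'\}$ in $G/F$ with $\psi(v_i')=\psi(v_j')$, contradicting that $\psi$ is proper. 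Consequently
\[
{\sf cost}(\phi)=\sum_{i=1}^{t}(|V(C_i)|-1)\;\leq\;\sum_{i=1}^{t}|E(C_i)\cap F|\;=\;|F|\;\leq\;k,
\]
where the first inequality uses that each $C_i$ is connected and so contains at least $|V(C_i)|-1$ edges of $F$.

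The only subtle point, and the one I would write out most carefully, is the argument in the reverse direction that two distinct components $C_i, C_j$ of the same colour cannot share a monochromatic component of $\phi$; everything else is straightforward counting. The forward direction is essentially free once the spanning tree trick is in place.
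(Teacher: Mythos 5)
The paper does not prove this lemma --- it is imported verbatim from Heggernes et al.\ \cite{HeggernesHLP13} --- so there is no in-paper proof to compare against. Your argument is correct and is the standard one: contracting spanning trees of the monochromatic components gives the edge set in one direction, and the partition of $V(G)$ into connected components of $(V(G),F)$ gives the $2$-coloring in the other; both the cost accounting (each connected component on $p$ vertices needs at least $p-1$ edges of $F$) and the key properness step (a $G$-edge between two distinct same-coloured components survives the contraction and would violate bipartiteness of $G/F$) are sound.
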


The following simple observation concerning the \contracvcone problem is the key insight to the algorithm of Theorem~\ref{thm:vc-XP}. Let $G$ be a graph and let $X$ be a minimum vertex cover of $G$. We define a $2$-coloring $\phi$ of $G$ as follows. For every vertex $v \in V(G)$, $\phi(v) = 1$ if $v \in X$, and
$\phi(v) = 2$ otherwise. Since $X$ is a vertex cover, $G[V_{\phi}^2]$ is edgeless. Consider the graph $G[V_{\phi}^1]=G[X]$, and distinguish two cases according to whether $G$ is bipartite or not. If it is not, then since
$G[V_{\phi}^2]$ is edgeless, necessarily $G[X]$ contains some edge $e$ (equivalently, ${\sf cost}(\phi) \geq 1$). Then contracting $e$ results in a graph having a vertex cover of size at most $|X|-1$, and therefore we can conclude that $G$ is a \yes-instance of the \contracvcone problem. Otherwise, if $G$ is bipartite, we can
solve \contracvcone on $G$ in polynomial time by first computing $\vc(G)$ in polynomial time using the fact that $G$ is bipartite~\cite{Diestel12}, and then computing $\vc(G /e )$ for every edge $e \in E(G)$ in polynomial time as explained below. If for some $e \in E(G)$,  we have that $\vc(G /e ) < \vc(G)$, we answer `\yes', otherwise we answer `\no'.  To compute $\vc(G /e )$ in polynomial time, let $w$ be the vertex resulting from the contraction of $e$ and, letting $G_e := G/e$, note that
$$
\vc(G_e)\ = \ \min\{ 1 + \vc( G_e \setminus \{w\})\ ,\ |N(w)| + \vc(G_e \setminus N[w])      \},
$$
and that both $G_e \setminus \{w\}$ and $G_e \setminus N[w]$ are bipartite, so a minimum vertex cover in them can be computed in polynomial time.

 Summarizing, the algorithm to solve \contracvcone in polynomial time works as follows: we first check whether $G$ is bipartite (in polynomial time). If it is not, we answer `\yes' (without needing to compute any minimum vertex cover). If it is, we solve the problem in polynomial time as discussed above.


\smallskip

In  Theorem~\ref{thm:vc-XP} (cf. Algorithm~\ref{alg:XPvc}) we generalize this idea to solve \contracvcd in polynomial-time for every fixed $d \geq 1$. We first need some technical lemmas.

\begin{lemma}\label{lem:2approxVC}
Let $G$ be a graph, $d \geq 1$ an integer, and $C$ a connected component of $G$ such that $\vc(C) \geq d+1$. Then there exists a set $F \subseteq E(G)$ with $|F| \leq 2d$ such that $\vc(G/F) \leq \vc(G) -d$.
\end{lemma}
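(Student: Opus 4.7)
The plan is to exhibit $d+1$ vertices of a minimum vertex cover of $G$ that all lie in $V(C)$ and are close enough in $C$ that a connected subgraph of $C$ containing them has at most $2d+1$ vertices. Contracting a spanning tree of such a subgraph uses at most $2d$ edges and merges these $d+1$ cover-vertices into a single one, which will yield a vertex cover of the contracted graph of size at most $\vc(G) - d$.

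First, I would fix a minimum vertex cover $X$ of $G$ and note that $X_C := X \cap V(C)$ is a minimum vertex cover of $C$, for otherwise swapping it with a smaller one would contradict the minimality of $X$; in particular $|X_C| = \vc(C) \geq d+1$. The key auxiliary structure is the graph $H$ with vertex set $X_C$ in which two vertices are adjacent whenever they lie at distance at most $2$ in $C$. I claim that $H$ is connected: since $X_C$ is a vertex cover of $C$, any vertex of $V(C) \setminus X_C$ lying in the interior of a path of $C$ has both its path-neighbors in $X_C$, so consecutive $X_C$-vertices along any $C$-path are at $C$-distance at most $2$, and the claim follows from the connectedness of $C$.

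Next I would select a set $S \subseteq X_C$ of $d+1$ vertices inducing a connected subgraph of $H$ (for instance, the first $d+1$ vertices visited by a BFS in $H$ from an arbitrary root), and let $T_H$ be a spanning tree of $H[S]$, which has exactly $d$ edges. For each edge $\{v_i, v_j\}$ of $T_H$, I would pick a path of length at most $2$ in $C$ witnessing that $v_i$ and $v_j$ are at $C$-distance at most $2$, and let $T^\star$ be the union of all these paths in $C$; then $T^\star$ is a connected subgraph of $C$ containing $S$ and using at most $d$ extra Steiner vertices from $V(C) \setminus X_C$, so $|V(T^\star)| \leq 2d + 1$. I would take $F$ to be the edge set of any spanning tree of $T^\star$, which then consists of at most $2d$ edges.

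To conclude, I would verify that in $G/F$ the vertex $w$ resulting from collapsing $V(T^\star)$ witnesses $\vc(G/F) \leq \vc(G) - d$ via the cover $X' := (X \setminus V(T^\star)) \cup \{w\}$: any surviving edge with an endpoint in $V(T^\star)$ is covered by $w$, while any surviving edge disjoint from $V(T^\star)$ is covered by $X \setminus V(T^\star)$ since $X$ is a vertex cover of $G$; and since $V(T^\star) \supseteq S$ with $|S| = d+1$ and $S \subseteq X$, one has $|X'| \leq |X| - (d+1) + 1 = \vc(G) - d$. The main subtlety, and the step that requires the most care, is the connectedness of $H$ and the resulting bound $|V(T^\star)| \leq 2d + 1$; both rest on the structural fact that in a connected graph, vertices of a vertex cover are linked by short alternating hops through vertices of the independent complement.
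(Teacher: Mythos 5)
Your proof is correct, and it rests on the same structural fact as the paper's: because the complement of a vertex cover is independent, cover vertices in a connected component are linked by hops of length at most two. The difference is purely in the packaging. The paper iterates a two-vertex observation: in any connected graph with $\vc \geq 2$, a minimum vertex cover contains two vertices at distance at most two (either an edge inside the cover, or a common non-cover neighbour), and contracting the connecting path of length at most two drops $\vc$ by at least one; applying this $d$ times to $C$ uses at most $2d$ edges in total. You instead do everything in one shot: you build the auxiliary ``distance-$2$'' graph on $X_C$, argue it is connected, extract $d+1$ cover vertices spanned by a Steiner tree of $C$ on at most $2d+1$ vertices, and contract that tree so that $d+1$ cover vertices collapse into one. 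Your version requires the extra connectivity argument for the auxiliary graph but produces a single connected set of contractions and makes the resulting cover completely explicit; the paper's version is shorter to state but implicitly relies on the intermediate graphs still satisfying the hypothesis of the observation (which holds, since the process can stop as soon as the total drop reaches $d$). Both correctly yield $|F|\leq 2d$ and $\vc(G/F)\leq \vc(G)-d$, using additivity of $\vc$ over components (in your case, via the global cover $X$ directly). No gaps.
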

\begin{proof}
The main observation is that for any connected graph $H$ such that $\vc(H) \geq 2$, any minimum vertex cover $X$ of $H$ contains two vertices $u,v$ within distance at most two in $H$. Indeed, either $H[X]$ contains an edge, and we choose $u,v$ to be the endvertices of that edge, or since $H$ is connected and $\vc(H) \geq 2$, necessarily there is a vertex in $V(H) \setminus X$ with at least two neighbors in $X$,
 which we choose as $u,v$. In both cases, contracting a shortest path (of length at most two) between such vertices  $u$ and $v$ results in a graph $H'$ with $\vc(H') \leq \vc(H)-1$.

Let $G$, $d$, and $C$ be as in the statement of the lemma. Since $\vc(C) \geq d+1$, we can recursively apply $d$ times the above observation to $C$, hence obtaining a set $F \subseteq E(C) \subseteq E(G)$ of size at most $2d$ such that $\vc(C/F) \leq \vc(C) -d$. Since the size of a minimum vertex cover is additive with respect to connected components, we have that $\vc(G/F) \leq \vc(G) -d$.
\end{proof}

\begin{lemma}\label{lem:smallVC}
Let $G$ be a graph, $d \geq 1$ an integer, and suppose that for every connected component $C$ of $G$, it holds that $\vc(C) \leq d$. Then the \mincontracvc problem with input $(G,d)$ can be solved in time $f(d) \cdot n^{\Ocal(1)}$ for some computable function $f$.
\end{lemma}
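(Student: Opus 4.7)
The plan is to reduce the global problem to $p$ independent per-component subproblems and glue them together via a bounded-knapsack dynamic program. The key observation is that $\vc$ is additive across connected components: if $C_1, \ldots, C_p$ are the components of $G$ and $F \subseteq E(G)$, writing $F_i := F \cap E(C_i)$ gives $\vc(G/F) = \sum_{i=1}^{p} \vc(C_i/F_i)$. Consequently, minimising $|F|$ under the constraint $\vc(G/F) \leq \vc(G) - d$ is equivalent to choosing, for each $C_i$, a drop $k_i \in \{0, \ldots, \vc(C_i)\}$ with $\sum_i k_i \geq d$ and, within each $C_i$, picking an edge set $F_i$ of minimum size satisfying $\vc(C_i/F_i) \leq \vc(C_i) - k_i$.

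For each component $C$, I would exploit $\tw(C) \leq \vc(C) \leq d$ together with the optimisation version of Courcelle's theorem (Proposition~\ref{thm:Courcelle}) to tabulate, for each $k \in \{0, \ldots, \vc(C)\}$, the value $m(C, k) := \min\{|F| : F \subseteq E(C),\ \vc(C/F) \leq \vc(C) - k\}$. For this I need an \MSO formula $\varphi_t(F)$ expressing ``$\vc(C/F) \leq t$'' with $F$ free, and the crucial point is that $t \leq d$ is bounded by the parameter, so one may hardcode $t$ set quantifiers enumerating the $\sim_F$-equivalence classes that form the chosen cover of $C/F$:
\[
\varphi_t(F) \ \equiv\ \exists X_1, \ldots, X_t \subseteq V(C)\ :\ \text{(a)} \wedge \text{(b)} \wedge \text{(c)},
\]
where (a) the $X_i$ are pairwise disjoint, (b) each $X_i$ is either empty or a full $\sim_F$-equivalence class (i.e., $F$-closed and $F$-connected), and (c) every edge $\{u,v\} \in E(C) \setminus F$ either satisfies $u \sim_F v$ (so it vanishes in $C/F$) or has an endpoint in $X_1 \cup \cdots \cup X_t$. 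All three conditions are easily written in \MSO once $u \sim_F v$ is encoded via reachability through $F$-closed sets. Proposition~\ref{thm:Courcelle} then returns the minimum $|F|$ satisfying $\varphi_{\vc(C)-k}$ in time $g(d) \cdot |V(C)|$, so summing over components costs $g(d) \cdot n$; the quantities $\vc(C)$ can be computed within the same budget.

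With the tables $m(C_i, \cdot)$ available, I would run a standard bounded-knapsack dynamic program: let $T[i, j]$ denote the minimum number of contractions in $C_1 \cup \cdots \cup C_i$ achieving a cumulative $\vc$-drop of at least $j$, for $j \in \{0, \ldots, d\}$. The transition $T[i, j] = \min_{0 \leq k \leq d}\bigl(T[i-1, \max(0, j-k)] + m(C_i, k)\bigr)$ runs in $\Ocal(d)$ per state, so the whole DP takes $\Ocal(n d^2)$ time, and the optimum of \mincontracvc is $T[p, d]$. The overall running time is therefore $f(d) \cdot n^{\Ocal(1)}$ for some computable $f$, as required.

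The main obstacle is the \MSO encoding of ``$\vc(C/F) \leq t$'': vertex covers of the contracted graph are indexed by $\sim_F$-equivalence classes of $C$ rather than by vertices of $C$, so the natural phrasing involves counting equivalence classes, which lies outside the expressive power of \MSO. The parameter bound $t \leq d$ is precisely what rescues the approach, since it lets us replace the counting by a fixed (parameter-dependent) number of set quantifiers that explicitly guess the classes used by the cover.
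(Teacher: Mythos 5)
Your proposal matches the paper's proof essentially step for step: the same decomposition via additivity of $\vc$ over connected components, the same use of the bound $\tw(C)\le\vc(C)\le d$ together with the optimization version of Courcelle's theorem to tabulate, for each component and each drop value, the minimum number of contractions, and the same knapsack-style dynamic program to combine the tables. The only difference is cosmetic: you encode ``$\vc(C/F)\le t$'' by guessing the $t$ covering $\sim_F$-equivalence classes as set variables, whereas the paper guesses $t$ representative vertices and declares an edge covered if an endpoint reaches a representative through $F$; both encodings exploit $t\le d$ in exactly the same way.
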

\begin{proof}
Let $C_1, \ldots, C_p$ be the connected components of $G$. Since  $\vc(C_i) \leq d$ for $i \in  [p]$, it is easy to observe that $\tw(G) \leq d+1$. For every two integers $i,d'$ with $i \in  [p]$ and $0 \leq d' \leq d$, we apply Proposition~\ref{thm:Courcelle} to solve \mincontracvc with input $(C_i,d')$ in time $f(d) \cdot n$ for some computable function $f$. For this, we just have to verify that the \mincontracvc problem on $(C_i,d')$ can be expressed by an \MSO formula whose length depends only on $d$. Indeed, it consists of finding the minimum size of a set $F \subseteq E(C_i)$ such that $\vc(C_i / F) \leq \vc(C_i) -d'$. To express the latter inequality by an \MSO formula with length depending on $d$, we crucially use the assumption that $\vc(C_i) \leq d$. To do this, we first compute $\ell:=\vc(C_i)$ independently with a standard \MSO formula (or with a standard branching algorithm, since we are assuming that $\vc(C_i) \leq d$). Then the inequality ``$\vc(C_i / F) \leq \ell-d'$'', where we have that $\ell-d' \leq d$,  can be expressed as the existence of a set of vertices $S:=\{v_1,\ldots,v_{\ell -d'} \}\subseteq V(C_i)$ such that every edge in $E(C_i) \setminus F$ has an endpoint in $S$ or has an endpoint $u \in V(F)$ such that there exists a vertex $v \in S$ and a path from $u$ to $v$ in $C_i$ using only edges in $F$. (This latter case captures the fact that an edge $e$ of $C_i / F$ can also be covered by a vertex $v \in V(F)$ that becomes eventually an endpoint of $e$ after contracting the edges in $F$.)

\smallskip


Let ${\sf opt}(C_i,d')$ be the output of \mincontracvc with input $(C_i,d')$, for $i \in  [p]$ and $0 \leq d' \leq d$. We assume that ${\sf opt}(C_i,d') = \infty$ if $\vc(C_i) \leq d'$, and ${\sf opt}(C_i,0) = 0$. With this information at hand, we present a simple dynamic programming algorithm to solve the \mincontracvc problem with input $(G,d)$ within the claimed running time.

Let ${\sf dp}(i,j)$ be the minimum size of a set $F \subseteq E(C_1) \cup \ldots \cup E(C_i)$ such that $\vc(G/F) \leq \vc(G) - j$, or $\infty$ if such set does not exist. Note that, in order to compute ${\sf dp}(i,j)$, if in an optimal solution the size of a minimum vertex cover drops by $q$ in $C_i$, then ${\sf dp}(i,j) = {\sf dp}(i-1,j-q) + {\sf opt}(C_i,q)$.
Then ${\sf dp}(i,j)$ can be computed as follows.

$$
{\sf dp}(i,j) = \left\{
        \begin{array}{ll}
            0 					& \mbox{if } j = 0, \\
            \infty 				& \mbox{if }i = 0 \mbox{ and } j > 0, \\
            \displaystyle \min_{0 \leq q \leq j} {\sf dp}(i-1,j-q)+ {\sf opt}(C_i,q) 		 & \mbox{otherwise.}
        \end{array}
    \right.
$$

Note that each ${\sf dp}(i,j)$ can be computed in time $\Ocal(j)$. Recall that $p$ is the number of connected components of $G$. Hence, since $p \leq n$,  ${\sf dp}(i,j)$ can be computed for each pair $i,j$ in total time $\Ocal(n\cdot d^2)$ and the answer is given by ${\sf dp}(p,d)$.\end{proof}

The \emph{bipartite contraction number} of a graph $G$, denoted by $\bc(G)$, is the minimum size of a set $F \subseteq E(G)$ such that $G/F$ is bipartite. We will use the following result of Heggernes et al.~\cite{HeggernesHLP13} as a subroutine in our algorithms. Note that~\cite{GUILLEMOTMARX} presents a faster algorithm.

\begin{proposition}[Heggernes et al.~\cite{HeggernesHLP13}]
\label{thm:bcFPT}
Given a graph $G$ and a positive integer $k$, deciding whether $\bc(G) \leq k$ is \FPT parameterized by $k$.
\end{proposition}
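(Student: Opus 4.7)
By Lemma~\ref{lem:2coloring}, $\bc(G) \leq k$ if and only if $G$ admits a $2$-coloring $\phi$ with ${\sf cost}(\phi) \leq k$, so the plan is to decide in \FPT time the existence of such a coloring. The first observation is that a small solution forces an almost-bipartite structure: if $F \subseteq E(G)$ satisfies $|F|\leq k$ and $G/F$ is bipartite, then $|V(F)| \leq 2k$, and $G \setminus V(F)$ coincides with the subgraph of $G/F$ induced by the non-contracted vertices, hence is bipartite. Therefore $\bc(G) \leq k$ implies $\oct(G) \leq 2k$. I would begin by invoking an \FPT algorithm for \textsc{Odd Cycle Transversal} parameterized by solution size (e.g.\ the iterative-compression algorithm of Reed, Smith, and Vetta) to either output an odd cycle transversal $S \subseteq V(G)$ with $|S| \leq 2k$, or correctly answer `\no'.

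Having fixed such $S$, the second step is to enumerate the $2^{|S|} \leq 2^{2k}$ possible $2$-colorings $\psi$ of $S$ and, for each one, test whether $\psi$ extends to a $2$-coloring of $G$ of cost at most $k$. Since $G \setminus S$ is bipartite, each connected component $C$ of $G \setminus S$ has a unique bipartition $\{A_C, B_C\}$, so an extension of $\psi$ amounts to choosing an orientation $o_C \in \{0,1\}$ for each $C$ (specifying which of $A_C, B_C$ is colored $1$). Using the identity ${\sf cost}(\phi) = |V(G)| - |\Mcal_\phi|$, the question becomes: choose the $o_C$'s so as to maximize the number of monochromatic components of the resulting coloring of $G$, and check that this maximum is at least $|V(G)|-k$.

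The hard part is handling the interaction between orientation choices through $S$: two components $C, C'$ whose neighborhoods in $S$ both touch the same color class of $\psi$ can be merged into (or separated from) a single monochromatic component depending on $o_C, o_{C'}$, so the per-component decisions do not decouple. My plan to deal with this is to bucket the components of $G \setminus S$ according to their neighborhood pattern in $S$ (there are at most $2^{|S|} \leq 2^{2k}$ patterns), and argue that within each bucket only a bounded amount of global state, depending on $\psi$ and the cumulative orientation choices already made, needs to be tracked; a straightforward dynamic programming over buckets then resolves each guess of $\psi$ in time $f(k)\cdot n^{O(1)}$. The main obstacle I anticipate is proving that the number of relevant states is bounded by a function of $k$; resolving this cleanly, either via an iterative-compression argument on $S$ itself or by encoding the extension problem as an \MSO-definable question on an auxiliary structure whose treewidth depends only on $|S|$, is the crux of the proof. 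Overall, combining the OCT subroutine, the enumeration over $\psi$, and the extension step yields total running time $f(k)\cdot n^{O(1)}$, establishing fixed-parameter tractability.
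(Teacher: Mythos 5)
This proposition is not proved in the paper at all: it is imported as a black box from Heggernes et al.~\cite{HeggernesHLP13}, so there is no in-paper argument to compare against and your sketch has to stand on its own. It does not. Beyond the gap you yourself flag (bounding the state space of the dynamic program), there is an earlier step that is simply unsound: you assume that, once the coloring $\psi$ of the odd cycle transversal $S$ is fixed, an optimal extension to $G \setminus S$ restricts to a \emph{proper} $2$-coloring of each component $C$ of $G \setminus S$, so that only the two ``orientations'' $o_C$ need to be considered. An optimal cost-$k$ coloring need not be proper on $G \setminus S$. Concretely, let $C$ be a single edge $\{a,b\}$ with $a$ adjacent to $s_1,s_2 \in S$ and $b$ adjacent to $s_3,s_4 \in S$, where $\psi(s_i)=1$ for all $i$ and the $s_i$ are otherwise isolated in the monochromatic subgraph. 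Either proper coloring of $C$ creates a monochromatic component of size $3$ (contributing $2$ to the cost), whereas the non-proper choice $\phi(a)=\phi(b)=2$ creates only the monochromatic edge $\{a,b\}$ (contributing $1$). So the search space you enumerate can miss the optimum, and the reduction to ``choose $o_C\in\{0,1\}$ per component'' already fails before the dynamic program starts.

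Even setting this aside, the part you correctly identify as the crux --- showing that the interaction of the components through $S$ can be resolved with only $f(k)$ bits of global state --- is exactly where the entire difficulty of the theorem lives, and ``a straightforward dynamic programming over buckets'' is an announcement, not an argument: the components do not arrive in an order along which the merging of monochromatic components decomposes, and what must be tracked includes at least a partition of $S$ recording which of its vertices have already been fused into common monochromatic components, on top of the bounded deviations from properness noted above. The published proofs take a genuinely different route: Heggernes et al.\ use iterative compression and reduce the compression step to a separation problem handled via important separators, and Guillemot and Marx~\cite{GUILLEMOTMARX} obtain a faster algorithm through a different reduction to a cut problem. Your opening moves (the observation that $\bc(G)\leq k$ implies $\oct(G)\leq 2k$, and the $2^{|S|}$ enumeration of colorings of $S$) are correct and are indeed how such arguments begin, but as written the proof does not go through.
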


We finally have all the ingredients to present our main algorithm.

\begin{theorem}\label{thm:vc-XP}
The \contracvc problem is in \XP parameterized by $d$. In particular, \contracvcd is polynomial-time solvable for every fixed $d \geq 1$.
\end{theorem}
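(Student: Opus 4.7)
My plan is to implement Algorithm~\ref{alg:XPvc} as outlined in the introduction, running on input $(G,k,d)$. Since a single edge contraction drops $\vc$ by at most one, I first return \no\ whenever $k < d$ and henceforth assume $k \ge d$. Next, I test in \FPT\ time whether $\bc(G) \ge d$ via Proposition~\ref{thm:bcFPT}. If this is the case, I claim $(G,k,d)$ is a \yes-instance: for any minimum vertex cover $X$ of $G$, the $2$-coloring $\phi$ with $V^1_\phi = X$ has $V^2_\phi$ independent, so only components of $G[X]$ are non-trivial and ${\sf cost}(\phi) = |X| - c(G[X])$, where $c(G[X])$ is the number of connected components of $G[X]$. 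By Lemma~\ref{lem:2coloring}, $|X| - c(G[X]) \ge \bc(G) \ge d$, so $G[X]$ contains a spanning forest with at least $d$ edges; contracting any $d$ of them gives a set $F$ of size $d \le k$ whose image of $X$ in $G/F$ has size $|X| - d$ and remains a vertex cover, so $\vc(G/F) \le \vc(G) - d$.

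Otherwise $\bc(G) \le d-1$, and I branch on whether some connected component $C$ of $G$ satisfies $\vc(C) > d$. If no such component exists, Lemma~\ref{lem:smallVC} solves \mincontracvc\ on $(G,d)$ in \FPT\ time, and I compare the output with $k$. Otherwise Lemma~\ref{lem:2approxVC} furnishes a set of at most $2d$ edges whose contraction drops $\vc$ by $d$; if $k \ge 2d$ I immediately answer \yes, and if not then $d \le k \le 2d-1$, so I enumerate all $F \subseteq E(G)$ with $|F| \le k$ in time $n^{O(d)}$.

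The main obstacle is deciding $\vc(G/F) \le \vc(G) - d$ for each enumerated $F$, as both quantities are \NP-hard to compute in general. My plan is to exploit $\bc(G) \le d-1$: fixing a $2$-coloring of $G$ of cost at most $d-1$ (obtainable together with Proposition~\ref{thm:bcFPT}), the union $S$ of its non-trivial monochromatic components has $|S| \le 2(d-1)$ because each such component of size $|M| \ge 2$ contributes $|M|-1$ to the cost while $|M| \le 2(|M|-1)$. Since $G \setminus S$ is bipartite, $\oct(G) \le 2d-2$. Augmenting $S$ with the at most $4d-2$ vertices of $V(F)$ then yields an OCT of $G/F$ of size $O(d)$. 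Branching on the $2^{O(d)}$ possible intersections of a minimum vertex cover with this OCT reduces, in each branch, to solving \probvc\ on a bipartite graph, which is polynomial by König's theorem. Evaluating $\vc(G)$ by the same trick and comparing gives, for each $F$, an \FPT\ decision; combining the three steps yields the desired \XP\ running time parameterized by $d$, and thus a polynomial-time algorithm for every fixed $d$.
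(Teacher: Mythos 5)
Your proposal is correct and follows essentially the same route as the paper's proof: the same case analysis on $\bc(G)$, the same use of Lemmas~\ref{lem:2coloring},~\ref{lem:2approxVC}, and~\ref{lem:smallVC}, the same enumeration of edge sets of size at most $2d-1$, and the same trick of computing $\vc(G/F)$ via an $\Ocal(d)$-size odd cycle transversal obtained from the bipartite-contraction structure plus $V(F)$, followed by branching and K\H{o}nig's theorem. The only (cosmetic) difference is that you extract the small transversal directly from the non-trivial monochromatic components of a cheap $2$-coloring, whereas the paper takes the endpoints $V(L)$ of the contracted edge set $L$; these are interchangeable via Lemma~\ref{lem:2coloring}.
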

\begin{proof}
Let $(G,k,d)$ be the input of \contracvc, and let $n = |V(G)|$. The  \XP algorithm that we proceed to present is summarized in Algorithm~\ref{alg:XPvc}.

\begin{algorithm}[t]
\LinesNumberedHidden
\KwIn{A triple $(G,k,d)$ with $n = |V(G)|$.}

\BlankLine

\If{$k < d$}{\Return{\no}.}
\Else(\text{($k \geq d$)}){\text{Check whether $\bc(G) \leq d-1$ in time $f(d) \cdot n^{\Ocal(1)}$ by Proposition~\ref{thm:bcFPT}.}

\If{$\bc(G) \geq d$}{\Return{\yes}.}

\Else(\text{($\bc(G) < d$)}){\text{Let $C_1, \ldots, C_p$ be the connected components of $G$.}\\
\text{For $i \in [p]$, check whether $\vc(C_i) \leq d$ in time $2^{\Ocal(d)} \cdot n^{\Ocal(1)}$.}\\
\If{\text{$\vc(C_i) \leq d$ \emph{for every} $i \in [p]$}}{
Solve \mincontracvc with input $(G,d)$  in time $f(d) \cdot n^{\Ocal(1)}$ by Lemma~\ref{lem:smallVC}. Let $k_0$ be the optimal solution.\\
\If{$k \leq k_0$}{\Return{\yes}.}
\Else(\text{($k > k_0$)}){\Return{\no}.}
}

\Else(\text{(there is a component $C$ with $\vc(C) \geq d+1$)}){
\If{$k \geq 2d$}{\Return{\yes} by Lemma~\ref{lem:2approxVC}.}
\Else(\text{($k < 2d$)}){
Enumerate all sets $F \subseteq E(G)$ with $|F| \leq k \leq 2d-1$ in time $n^{\Ocal(d)}$.\\
For each $F$, compute $\vc(G/F)$ in time $2^{\Ocal(d)} \cdot n^{\Ocal(1)}$.\\
\If{\emph{for some $F$, $\vc(G/F) \leq \vc(G) - d$}}{\Return{\yes}.}
\Else(\text{(there is no $F$ such that $\vc(G/F) \leq \vc(G) - d$)}){\Return{\no}.}
}
}
}
}
\caption{\XP algorithm for the \contracvc problem parameterized by $d$.} \label{alg:XPvc}
\end{algorithm}

Note that since the contraction of an edge may drop the minimum vertex cover of a graph by at most one, we may assume that $k \geq d$, as otherwise the answer is trivially `\no'. We start by checking whether $\bc(G) \leq d-1$ by using Proposition~\ref{thm:bcFPT} in time $f(d) \cdot n^{\Ocal(1)}$. We distinguish two cases.

Assume first that $\bc(G) \geq d$, and let $X$ be a minimum vertex cover of $G$, which is only used for the analysis. We define a $2$-coloring $\phi$ of $G$ as follows. For every vertex $v \in V(G)$, $\phi(v) = 1$ if $v \in X$, and $\phi(v) = 2$ otherwise. Since $X$ is a vertex cover, $G[V_{\phi}^2]$ is edgeless. Since $\bc(G) \geq d$, Lemma~\ref{lem:2coloring} implies that ${\sf cost}(\phi) \geq d$, which in turn implies, since $G[V_{\phi}^2]$ is edgeless, that $G[V_{\phi}^1]=G[X]$ contains a set of connected components $\Mcal$ such that  $\sum_{X \in \Mcal} (|X| - 1) \geq d$. Then contracting in $G$ any set $F$ of $d$ edges of a spanning forest of $\Mcal$ results in a graph $G/F$ such that $\vc(G/F) \leq \vc(G) -d$. Since we may assume that $k \geq d$, in this case we can safely answer `\yes'.

Otherwise, we have that $\bc(G) \leq d-1$.
Let $C_1, \ldots, C_p$ be the connected components of $G$.
For every $i \in [p]$, we check whether $\vc(C_i) \leq d$ in time $2^{\Ocal(d)} \cdot n^{\Ocal(1)}$ by using an \FPT algorithm for \textsc{Vertex Cover}~\cite{CyganFKLMPPS15}. We distinguish again two cases.

If $\vc(C_i) \leq d$ for every $i \in [p]$, we apply Lemma~\ref{lem:smallVC} and solve the \mincontracvc problem with input $(G,d)$  in time $f(d) \cdot n^{\Ocal(1)}$ for some computable function $f$. If the optimal solution is larger than $k$, we answer `\no', otherwise we answer `\yes'.

Otherwise, there exists a connected component $C$ of $G$ such that
$\vc(C) \geq d+1$. By Lemma~\ref{lem:2approxVC}, there exists a set $F \subseteq E(G)$ with $|F| \leq 2d$ such that $\vc(G/F) \leq \vc(G) -d$. Hence, if $k \geq 2d$, we answer `\yes'. Otherwise, we have that $k \leq 2d-1$, and we solve the problem in time $n^{\Ocal(d)}$ as follows. We enumerate all candidate sets $F \subseteq E(G)$ with $|F| \leq k \leq 2d-1$, which are $n^{\Ocal(d)}$ many, and for each such a set $F$, compute $\vc(G/F)$ in time $2^{\Ocal(d)} \cdot n^{\Ocal(1)}$ as explained below. With the same technique we are also able to compute $\vc(G)$ within the same runtime bound. If for some such a set $F$, we have that $\vc(G/F) \leq \vc(G) - d$, we answer `\yes', otherwise we answer `\no'.

Let us now see, given a set $F \subseteq E(G)$ with $|F| \leq k \leq 2d-1$, how $\vc(G/F)$ can be computed in time $2^{\Ocal(d)} \cdot n^{\Ocal(1)}$. To do that, we start by finding a set $B \subseteq V(G/F)$ with $|B| = \Ocal(d)$ such that $(G/F) \setminus B$ is bipartite, as explained in the next paragraph. Once we have the set $B$ at hand, we can  guess which vertices of $B$ belong to the vertex cover. Since $|B|=\Ocal(d)$, this can be done within the claimed running time. We can now delete the vertices of $B$ from $G/F$, together with $N(v)$ for every $v\in B$ such that $v$ belongs to the vertex cover. The remaining graph is bipartite, hence we can compute a minimum vertex cover of it in polynomial time. Note that this procedure works even if $F=\emptyset$, hence $\vc(G)$ can also be computed in time $2^{\Ocal(d)} \cdot n^{\Ocal(1)}$. Therefore, to conclude the proof if it enough to find such a set $B \subseteq V(G/F)$.


Recall that we are in the case where $\bc(G) \leq d-1$. Let $L \subseteq E(G)$ with $|L| \leq d-1$ such that $G/L$ is bipartite, obtained in time \FPT in $d$ by Proposition~\ref{thm:bcFPT} (it is easy to see that the  \FPT algorithm for the decision version can also obtain  in \FPT time the corresponding set of edges to be contracted). Note that $G \setminus V(L)$ is also bipartite. Let $V_F$ be the set of vertices in $V(G/F)$ resulting from the contraction of $F$. We set $B:=V(L) \cup V_F$. Note that $|B| \leq  |V(L)| + |V_F| \leq 2(d-1) + 2(2d-1) = \Ocal(d)$ and that $(G/F) \setminus B$ is a subgraph of $G \setminus V(L)$, hence it is bipartite as well, and the theorem follows.
\end{proof}


From the \XP algorithm given in Theorem~\ref{thm:vc-XP} we easily get the following corollary.

\begin{corollary}\label{cor:vc-2-approx}
The \mincontracvc problem can be 2-approximated in \FPT time parameterized by $d$.
\end{corollary}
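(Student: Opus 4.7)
The plan is to modify Algorithm~\ref{alg:XPvc} so that every branch runs in FPT time parameterized by~$d$. Since any feasible solution has $|F|\geq d$ (each contraction drops the vertex cover number by at most one), it suffices to produce in FPT time an $F$ of size at most $2d$ satisfying $\vc(G/F)\leq \vc(G)-d$; this automatically yields a 2-approximation.

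The only non-FPT step of Algorithm~\ref{alg:XPvc} is the enumeration of all $F\subseteq E(G)$ with $|F|\leq 2d-1$, invoked when $\bc(G)\leq d-1$ and some connected component of $G$ has vertex cover number at least $d+1$. I would replace this enumeration by an inductive realization of Lemma~\ref{lem:2approxVC}. Starting with $F=\emptyset$ and $G_0=G$, iterate $d$ times; at iteration $i$, locate a component $C$ of $G_{i-1}$ with $\vc(C)\geq 2$ (which exists because initially some component has $\vc\geq d+1$ and the vertex cover number drops by exactly one per iteration), enumerate all $O(n^2)$ pairs of vertices $(u,v)$ of $C$ at distance at most two, and, for each such pair, contract a shortest $u$-$v$ path and test whether the vertex cover number of the resulting graph has strictly decreased. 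By the proof of Lemma~\ref{lem:2approxVC}, any two vertices at distance at most two within a minimum vertex cover of $C$ witness such a decrease, so the search always succeeds and adds at most two edges to $F$.

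The key observation that makes each iteration FPT is that the vertex cover check can be performed in time $2^{\Ocal(d)}\cdot n^{\Ocal(1)}$ using exactly the subroutine from the proof of Theorem~\ref{thm:vc-XP}. Indeed, since $\bc(G)\leq d-1$, we compute once a set $L\subseteq E(G)$ with $|L|\leq d-1$ such that $G/L$ is bipartite, via Proposition~\ref{thm:bcFPT}. At every iteration the graph $G_i$ equals $G/F_i$ for the current $F_i$ of size at most $2d$, so setting $B:=V(L)\cup V_{F_i}$ gives a set of $\Ocal(d)$ vertices whose removal from $G_i$ leaves a subgraph of the bipartite graph $G\setminus V(L)$; the vertex cover of $G_i$ (or $G_i/P$ for a candidate path $P$) is then computed by a $2^{\Ocal(d)}$-way guess on $B$ combined with a polynomial-time bipartite vertex cover computation on the remainder. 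Summing the cost over $\Ocal(dn^2)$ candidate contractions yields a total running time of $2^{\Ocal(d)}\cdot n^{\Ocal(1)}$.

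The main obstacle I anticipated is that Lemma~\ref{lem:2approxVC} is proved non-constructively via a minimum vertex cover; the resolution is that we only need, per iteration, to decide whether a specific contraction strictly decreases $\vc$, which is possible in FPT time thanks to the subroutine above. All remaining branches of Algorithm~\ref{alg:XPvc} are already FPT: when all components have $\vc\leq d$, Lemma~\ref{lem:smallVC} gives the exact optimum, and the case $\bc(G)\geq d$ is either handled by Lemma~\ref{lem:smallVC} (if all components have $\vc\leq d$) or by the same iterative construction applied to a component of large $\vc$, for which OPT$\,=d$ and any output of size at most $2d$ is within the 2-approximation factor.
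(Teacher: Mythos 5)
Your proposal is correct in substance and follows the same skeleton as the paper's proof (reuse Algorithm~\ref{alg:XPvc}, observe that only one branch fails to be \FPT in $d$, and handle that branch via Lemma~\ref{lem:2approxVC}), but you resolve the problematic branch very differently. The paper's resolution is a one-liner: since \mincontracvc only asks for the \emph{value} $k_0$, and in that branch Lemma~\ref{lem:2approxVC} gives $k_0\leq 2d$ while trivially $k_0\geq d$, one simply outputs $2d$ with no computation at all. You instead make Lemma~\ref{lem:2approxVC} constructive by an iterative search for a distance-$\leq 2$ pair whose contraction strictly decreases $\vc$, certified by the $2^{\Ocal(d)}\cdot n^{\Ocal(1)}$ vertex-cover oracle from the proof of Theorem~\ref{thm:vc-XP} (valid here because $\bc(G)\leq d-1$ and the running set $F$ stays of size $\Ocal(d)$). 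This is more work than the corollary requires, but it is sound and buys something genuinely stronger: an explicit edge set $F$ with $|F|\leq 2d$ witnessing the approximate solution, not just a numerical bound.

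Two small wrinkles to fix. First, in the branch $\bc(G)\geq d$ you propose to fall back on ``the same iterative construction,'' but there your vertex-cover oracle is unavailable, since it needs a set $L$ with $|L|\leq d-1$ making $G/L$ bipartite; this is harmless because in that branch the analysis of Theorem~\ref{thm:vc-XP} already gives $k_0=d$, so you should just output $d$. Second, the vertex cover number need not drop by \emph{exactly} one per iteration (contracting a length-two path can drop it by two), so you should track the cumulative drop with the oracle and stop as soon as it reaches $d$; this is what guarantees both that $|F|\leq 2d$ and that, while the loop continues, the image of the large component still has $\vc\geq 2$ so the next iteration is guaranteed to succeed.
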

\begin{proof}
Let $(G,d)$ be the input of the \mincontracvc problem, and let $k_0$ be the desired minimum number of contractions to drop the vertex cover number of $G$ by at least $d$, so necessarily $k_0 \geq d$. Note that the algorithm given in Theorem~\ref{thm:vc-XP} (cf.~Algorithm~\ref{alg:XPvc}) either concludes that there is no feasible solution (in case $k_0<d$), or concludes that $k_0 = d$ (in case $\bc(G) < d$), or solves optimally the \mincontracvc problem with input $(G,d)$ (in case $\vc(C_i) \leq d$ for every connected component $C_i$ of $G$), except in the case where there exists a component $C$ of $G$ such that $\vc(C) \geq d+1$. That is, this algorithm in fact solves the \mincontracvc problem with input $(G,d)$ except for this latter case, which is in fact the only step of the algorithm that is not \FPT in $d$, since we need to enumerate the candidate sets $F \subseteq E(G)$ of size at most $2d-1$. But if we are aiming at a $2$-approximation algorithm, in the case where there is a component $C$ of $G$ with $\vc(C) \geq d+1$, we can just apply Lemma~\ref{lem:2approxVC} directly and conclude that $k_0 \leq 2d$. Since $k_0 \geq d$, this yields a 2-approximation that runs in time \FPT parameterized by $d$.
\end{proof}

\section{Conclusions and further research}
\label{sec:concl}

We provided \coNP-hardness results for the \contracHcalone problem when $\Hcal$ contains only $2$-connected graphs and at least one of them is not a clique (Theorem~\ref{thm:hard-Hcal}), when $\Hcal$ consists of cliques but only for the minor and topological minor containment relations (Theorem~\ref{thm:hard-cliques}), and when $\Hcal$ contains a path on at least four vertices and $2$-connected graphs (Theorem~\ref{thm:hard-paths}). Several interesting cases remain open, for instance when $\Hcal = \{H\}$ with $H=P_3$, $H=K_h$ with $h\geq 3$ (for the subgraph and induced subgraph relations), or $H$ being an arbitrary tree.
The cases where $\Hcal$ may contain disconnected graphs seem to be trickier.

For the cases that are \coNP-hard, it is natural to parameterize the problem by $\tau_{\Hcal}^{\prec}$, that is, by the value of the parameter in the input graph $G$. If $\prec$ is the minor relation and $\Hcal$ contains some planar graph, it is well-known~\cite{RobertsonS86} that the treewidth of $G$ is bounded by $\tau_{\Hcal}^{\prec}(G)$ plus a function that depends only on $\Hcal$. In this case, the \contracHcal problem is \FPT parameterized by $\tau_{\Hcal}^{\prec}(G)+k$, since it can be expressed by an \MSO formula with length depending only on $k$ (note that we may assume that $k \geq d$), and therefore it can be solved in time $f(\tau_{\Hcal}^{\prec}(G),k) \cdot n$ by Courcelle's Theorem~\cite{Courcelle90}. In particular, this observation yields that when $\Hcal=\{K_3\}$, the \contracfvs problem is \FPT parameterized by $\fvs + k$.

\smallskip


When $\Hcal=\{K_2\}$, that is, when $\tau_{\Hcal}^{\prec}$ is the size of a minimum vertex cover, we proved that the \contracvc problem parameterized by $d$ is in \XP (Theorem~\ref{thm:vc-XP}) and
can be 2-approximated in \FPT time (Corollary~\ref{cor:vc-2-approx}). The natural question is whether  \contracvc is \FPT or {\sf W}[1]-hard parameterized by $d$.
We tend to believe that the former case holds. In fact, we do not even know whether the problem is \FPT parameterized by $k$ (recall that we may assume that $k \geq d$).

\medskip

\noindent\textbf{Acknowledgement}. We would like to thank the anonymous reviewers of the conference and journal versions of this article for helpful remarks that improved the presentation of the manuscript.




\bibliography{bibliography}

\end{document}